\newcommand{\reals}{\mathbb{R}}
\newcommand{\complex}{\mathbb{C}}
\newcommand{\naturals}{\mathbb{N}}
\newcommand{\integers}{\mathbb{Z}}
\newcommand{\bracketb}[1]{\Big[#1\Big]}
\newcommand{\bracketc}[1]{\bigg[#1\bigg]}
\newcommand{\angles}[1]{\left\langle #1 \right\rangle}
\newcommand{\norm}[1]{\left|\left|#1\right|\right|}
\newcommand{\abs}[1]{\left|#1\right|}
\newcommand{\para}[1]{\left(#1\right)}
\newcommand{\paraa}[1]{\big(#1\big)}
\newcommand{\parab}[1]{\Big(#1\Big)}
\newcommand{\parac}[1]{\bigg(#1\bigg)}
\newcommand{\spacearound}[1]{\quad#1\quad}
\newcommand{\equivalent}{\spacearound{\Leftrightarrow}}
\renewcommand{\implies}{\spacearound{\Rightarrow}}
\newtheorem{theorem}{Theorem}[section]
\newtheorem{proposition}[theorem]{Proposition}
\theoremstyle{definition}
\theoremstyle{remark}
\numberwithin{equation}{section}
\newtheorem{conjecture}[theorem]{Conjecture}
\newcommand{\xv}{\vec{x}}
\renewcommand{\mid}{\mathds{1}}
\renewcommand{\Re}{\operatorname{Re}}
\renewcommand{\Im}{\operatorname{Im}}
\renewcommand{\d}{\partial}
\newcommand{\Xv}{\vec{X}}
\newcommand{\ket}[1]{\big|#1\rangle}
\newcommand{\bra}[1]{\langle#1\big|}
\renewcommand{\L}{\Lambda}
\newcommand{\Ld}{\Lambda^\dagger}
\newcommand{\ut}{\tilde{u}}
\newcommand{\vt}{\tilde{v}}
\newcommand{\xt}{\tilde{x}}
\newcommand{\zt}{\tilde{z}}
\newcommand{\st}{\tilde{s}}
\newcommand{\Ut}{\tilde{U}}
\newcommand{\Vt}{\tilde{V}}
\newcommand{\Wd}{W^\dagger}
\newcommand{\vol}{\operatorname{vol}}
\newcommand{\Mat}{\operatorname{Mat}}
\newcommand{\Tr}{\operatorname{Tr}}
\newcommand{\End}{\operatorname{End}}
\newcommand{\zb}{\bar{z}}
\renewcommand{\tt}{\tilde{t}}
\newcommand{\Uh}{\hat{U}}
\newcommand{\Vh}{\hat{V}}
\newcommand{\St}{\tilde{S}}
\newcommand{\tth}{\hat{\tilde{t}}}
\newcommand{\Deltah}{\hat{\Delta}}
\title{Quantum Minimal Surfaces}%\\%
\author{Joakim Arnlind, Jens Hoppe, Maxim Kontsevich}
\address[Joakim Arnlind]{Dept. of Math.\\
Link\"oping University\\
581 83 Link\"oping\\
Sweden}
\email{joakim.arnlind@liu.se}
\address[Jens Hoppe]{
  Institut des Hautes Etudes Scientifiques,
Le Bois-Marie, 35, Route de Chartres, 91440 Bures-sur-Yvette,
France 
}
\email{jhoppe@ihes.fr}
\address[Maxim Kontsevich]{
Institut des Hautes Etudes Scientifiques,
Le Bois-Marie, 35, Route de Chartres, 91440 Bures-sur-Yvette,
France }
\email{maxim@ihes.fr}
\subjclass[2000]{}
\keywords{}
\begin{document}

\maketitle

\vspace{3mm}

\begin{center}
  \textsc{\small Abstract}\vspace{1mm}\\
  \parbox{135mm}{We discuss quantum analogues of minimal surfaces in Euclidean spaces and tori.}
\end{center}

\vspace{5mm}

\section{Introduction}

\noindent
It is well-known that minimal surfaces in $\reals^d$ can be
characterized as extremal points of the so called \emph{Schild
  functional}:
\begin{align}
  S(\vec{x})=\int_\Sigma \sum_{1\le i < j\le d}\{x_i,x_j\}^2\cdot \omega
\end{align}
where $\vec{x}:\Sigma\to \reals^d$ is a map from a surface $\Sigma$
endowed with symplectic 2-form $\omega$ to Euclidean space $\reals^d$,
and $\{\cdot,\cdot\}$ denotes the Poisson bracket on $\Sigma$. More
precisely, critical points of the Schild action are either
``degenerate'' maps with 1-dimensional image (i.e. components
$x_i, i =1,\dots,d$ of $\vec{x}$ are functionally dependent and all
Poisson brackets $\{x_i,x_j\}$ vanish identically), or the image of
$\vec{x}$ is a minimal surface in $\reals^d$ and the symplectic form
$\omega$ is proportional (with a constant factor) to the volume form
associated with the induced metric on $\Sigma$.

The Euler-Lagrange equations for the Schild action are
\begin{align}\label{eq:X.double.poisson.com}
  \sum_{i=1}^d\{x_i,\{x_i,x_j\}\} = 0\qquad j=1,\ldots,d.
\end{align}
One can call \emph{quantum minimal surface} a solution of the equation
 \begin{align}\label{eq:X.double.com}
  \sum_{i=1}^d[X_i,[X_i,X_j]] = 0\qquad j=1,\ldots,d,
\end{align}
where $X_i$ are self-adjoint operators in a Hilbert space 
-- a matrix equation which is also of interest in the context of the
bosonic BFSS \cite{h:phdthesis} and IKKT \cite{ikkt:superstring}
model, as well as for many other reasons (some of which we will
comment on in Section \ref{sec:remarks}).

In \cite{ach:nms} the classical Weierstrass representation for minimal
surfaces, utilizing the existence of isothermal parameters, was
generalized to a non-commutative one, yielding triples of Weyl algebra
elements constituting non-commutative minimal surfaces. In
\cite{ah:quantizedMinimal}, on the other hand, a quantization of the
Catenoid was written as formal power-series in $\hbar$, satisfying \eqref{eq:X.double.com}.

One can give a general procedure for constructing solutions to
\eqref{eq:X.double.com} as follows: Start with an arbitrary minimal
surface $\xv(u,v)$ (given e.g., but not necessarily, in isothermal
parametrization, $g_{ab}=\d_a\xv\cdot\d_b\xv=\sqrt{g}\delta_{ab}$);
reparametrize as
\begin{align}
  &\underset{\!\widetilde{}}{\xv}(\ut,\vt):=
    \xv\paraa{u(\ut,\vt),v(\ut,\vt)}\label{eq:orig.xvutvt}
\end{align}
with the new coordinates chosen such that
\begin{align}\label{eq:J}
  &J:=\abs{\frac{\d(\ut,\vt)}{\d(u,v)}}=\sqrt{g},\qquad
  %\textrm{with }\quad[\Ut,\Vt]=i\hbar\mid\qquad \Xh_i:=\xt_i(\Ut,\Vt)
\end{align}
i.e. $\{\tilde{u},\tilde{v}\}=1$. With $[\Ut,\Vt]=i\hbar\mid$,
\begin{align}
 X_i:=\xt_i(\Ut,\Vt) 
\end{align}
will then satisfy \eqref{eq:X.double.com}, to
lowest order in $\hbar$, due to
\begin{align}
  \frac{1}{\sqrt{g}}\d_a\paraa{\sqrt{g}g^{ab}\d_b\xv}\equiv\{x_i\{x_i,\xv\}\}
\end{align}
when
\begin{align}
  \{f,h\}:=\frac{1}{\sqrt{g}}\epsilon^{ab}(\d_af)(\d_bh);
\end{align}
note that \eqref{eq:orig.xvutvt} implies $\sqrt{\tilde{g}}=1$,
i.e. furnishing a transformation to a parametrization where the
determinant of the first fundamental form is constant ($=1$).

Let us first consider the following (non-compact) examples: Catenoids
(for which, due to the rotational symmetry, one can easily obtain
rather explicit expressions, resp. existence proofs to all orders) and
Enneper surfaces (where, despite of, again, fairly explicit formulas
one sees much of the difficulties involved concerning the general
case).

Note that for $d=3$ equation \eqref{eq:X.double.com}, with
$W:=X_1+iX_2$ and $Z:=X_3$, reads 
\begin{equation}\label{eq:Deltah.W.Z}
  \begin{split}
    &\Deltah(W) = \frac{1}{2}[W,[\Wd,W]]+[Z,[Z,W]] = 0\\
    &\Deltah(Z) = \frac{1}{2}[W,[\Wd,Z]] + \frac{1}{2}[\Wd,[W,Z]] = 0.
  \end{split}
\end{equation}

\section{Catenoid}

\noindent
Let us consider the catenoids
\begin{align}
  &\xv_{\pm} =
    \paraa{a\cosh(v-v_0)\cos(u-u_0),
  a\cosh(v-v_0)\sin(u-u_0), \pm av}+\xv_0  \\
  &(g_{ab})=a^2\cosh^2(v-v_0)
  \begin{pmatrix}
    1 & 0 \\ 0 & 1
  \end{pmatrix}
                 =\sqrt{g}\mid.\notag
\end{align}
Reparametrizing $\xv_{\pm}$ as $\xv_{\pm}(\ut=u,\vt)$ in accordance
with \eqref{eq:J}, i.e.
\begin{align}\label{eq:cat.dvtdv}
  \frac{d\vt}{dv} = a^2\cosh^2(v-v_0)\qquad
  \vt(v) = \frac{a^2}{2}\parab{v-v_0+\frac{1}{2}\sinh 2(v-v_0)}+\vt_0
\end{align}
gives
\begin{align*}
  \Xv_{\pm} =
  \parab{a&\cosh\paraa{v(\Vt-\tilde{v}_0\mid)-v_0\mid}\cos(\Ut-u_0\mid),\\
  &a\cosh\paraa{v(\Vt-\tilde{v}_0\mid)-v_0\mid}\sin(\Ut-u_0\mid),
  \pm a v(\Vt-\tilde{v}_0\mid))}+\xv_0
\end{align*}
with a natural representation on the basis $\ket{n}\hat{=}e^{-in\varphi}$ as
\begin{align}
  e^{i(\Ut-u_0\mid)}\ket{n} = e^{i\varphi}\ket{n}=\ket{n-1}\qquad
  (\Vt-\tilde{v}_0\mid)\ket{n} = -i\hbar\frac{\d}{\d\varphi}\ket{n}
  =-\hbar n\ket{n}
\end{align}
%which makes \eqref{eq:cat.dvtdv} of the form
which leads to the Ansatz
\begin{align}\label{eq:cat.WWZpm}
  W\ket{n} = w_n\ket{n-1}\qquad
  Z^{(\pm)}\ket{n} =z_n\ket{n}.
\end{align}
Inserting \eqref{eq:cat.WWZpm} into \eqref{eq:Deltah.W.Z} one obtains
the recursion relations
\begin{align}
  &2(z_n-z_{n-1})^2 = r_{n+1}+r_{n-1}-2r_n\label{eq:cat.znrn}\\
  &r_n(z_n-z_{n-1}) = r_{n+1}(z_{n+1}-z_n)=\textrm{const}=c\label{eq:cat.znrnc}
\end{align}
i.e. $r_n:=|w_n|^2$ determined by
\begin{align}\label{eq:cat.rndiffc}
  (r_{n+1}-r_n)-(r_n-r_{n-1}) = \frac{2c^2}{r_n^2}
\end{align}
and then $z_n$ given via \eqref{eq:cat.znrnc},
\begin{align}
  z_n = \frac{c}{r_n}+z_{n-1}.\label{eq:cat.zn}
\end{align}
Denoting $v-v_0$ by $q$ and $\vt-\vt_0$ by $p$
(i.e. $p=a^2/2(q+\frac{1}{2}\sinh 2q)$) being an odd function of $q$,
resp. $q$ an odd function of $p$, and $\frac{dp}{dq}=a^2\cosh^2q$
\eqref{eq:cat.rndiffc} and \eqref{eq:cat.zn} are by construction
solved in the $\hbar\to 0$ limit ($n$ arbitrary but fixed) by
\begin{equation}
  \begin{split}
    &r_n = a^2\cosh^2\paraa{q(-\hbar n)}\\
    &z_n = \pm a\paraa{q(-\hbar n)+v_0}+(\xv_0)_3
  \end{split}
\end{equation}
which can easily be verified, as \eqref{eq:cat.znrn} and
\eqref{eq:cat.znrnc} in this limit become the differential equations
\begin{align}\label{eq:cat.zprzpD}
  2z'^2=r''\qquad rz'=D\quad(D=\lim_{\hbar\to 0}c/\hbar),
\end{align}
which \emph{are} satisfied (using $\frac{dq}{dp}=1/(a^2\cosh^2q(p))$) by
\begin{equation}\label{eq:cat.rpzp}
  \begin{split}
    &r(p) = a^2\cosh^2q(p)\\
    &z(p) = \pm a\paraa{q(p)+v_0}+(\xv_0)_3\qquad D=\pm a.
  \end{split}
\end{equation}
Let us now consider the general solution to the recursion relations
\eqref{eq:cat.znrn} and \eqref{eq:cat.znrnc}. First of all, since
$r_n=|w_n|^2$ one is interested in \emph{positive} solutions,
i.e. solutions $(\{r_n\}_{n\in\integers},\{z_n\}_{n\in\integers})$
with $r_n>0$ for $n\in\integers$. Moreover, we are interested in
\emph{non-constant} solutions (noting that \eqref{eq:cat.znrn} and
\eqref{eq:cat.znrnc} have constant solutions), and it is easy to see that
for a non-constant solution, both sequences $\{r_n\}$ and $\{z_n\}$ are
necessarily non-constant. Let us start by showing that for appropriate
initial conditions, one obtains a positive and non-constant solution
to the recursion relations.

\begin{proposition}\label{prop:rz.recursion}
  For $c,r_0,r_1,z_0\in\reals$ such that $c\neq 0$ and
  \begin{align*}
    0<r_0\leq r_1\leq r_0+\frac{2c^2}{r_0^2}
  \end{align*}
  there exists a non-constant solution to the system
  \begin{align}
    &2(z_n-z_{n-1})^2 = r_{n+1}+r_{n-1}-2r_{n}\label{eq:DeltaWn.zero}\\
    &r_n(z_n-z_{n-1}) = r_{n+1}(z_{n+1}-z_n)\label{eq:DeltaZn.zero}
  \end{align}
  such that
  \begin{alignat*}{2}
    &r_{n+1}\geq r_{n}>0 &\qquad &\text{for }n\geq 0\\
    &r_{n-1}\geq r_{n}>0 & &\text{for }n\leq 0.
  \end{alignat*}
  The solution is given recursively by
  \begin{alignat}{2}
    r_{n} &= 2r_{n-1}-r_{n-2}+\frac{2c^2}{r_{n-1}^2} &\qquad &\text{for }n\geq 2\label{eq:rposn.rec}\\
    r_{n} &= 2r_{n+1}-r_{n+2}+\frac{2c^2}{r_{n+1}^2} & &\text{for }n\leq -1\label{eq:rnegn.rec}\\
    z_{n} &= z_{n-1}+\frac{c}{r_n} & &\text{for }n\geq 1\label{eq:zposn.rec}\\
    z_{n} &= z_{n+1}-\frac{c}{r_{n+1}} & &\text{for }n\leq -1.\label{eq:znegn.rec}
    % z_{n} &= z_{n-1}-\frac{c}{r_n} & &\text{for }n\geq 1\label{eq:zposn.rec}\\
    % z_{n} &= z_{n+1}+\frac{c}{r_n} & &\text{for }n\leq -1.\label{eq:znegn.rec}
  \end{alignat}
\end{proposition}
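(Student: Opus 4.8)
The plan is to \emph{define} the sequence $\{r_n\}_{n\in\integers}$ from the two data $r_0,r_1$ by iterating the forward recursion \eqref{eq:rposn.rec} for $n\geq 2$ and the backward recursion \eqref{eq:rnegn.rec} for $n\leq -1$, and then to define $\{z_n\}$ by \eqref{eq:zposn.rec}--\eqref{eq:znegn.rec}. Both $r$-recursions are nothing but the single second-difference relation \eqref{eq:cat.rndiffc}, i.e.\ $r_{n+1}-2r_n+r_{n-1}=2c^2/r_n^2$, solved for the outermost term in the respective direction; likewise the two $z$-recursions amount to the single identity $z_n-z_{n-1}=c/r_n$ holding for every $n$. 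The content of the proposition is therefore to show that this recursively defined pair actually solves \eqref{eq:DeltaWn.zero}--\eqref{eq:DeltaZn.zero}, stays positive, and exhibits the stated monotonicity.

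Granting positivity for the moment, verifying the two equations is immediate. Equation \eqref{eq:DeltaZn.zero} holds because $r_n(z_n-z_{n-1})=c=r_{n+1}(z_{n+1}-z_n)$ directly from the definition of the $z_n$, and \eqref{eq:DeltaWn.zero} holds because $2(z_n-z_{n-1})^2=2c^2/r_n^2=r_{n+1}+r_{n-1}-2r_n$, the last equality being the very relation defining $r_n$. Non-constancy is equally direct: since $c\neq 0$ and the $r_n$ are positive, the increments $z_n-z_{n-1}=c/r_n$ never vanish, so $\{z_n\}$ is strictly monotone and, through \eqref{eq:DeltaWn.zero}, $\{r_n\}$ cannot be constant either.

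The real work, and the main obstacle, is the simultaneous positivity and monotonicity, which I would obtain by induction using the first-difference sequence $d_n:=r_{n+1}-r_n$. Relation \eqref{eq:cat.rndiffc} reads $d_n-d_{n-1}=2c^2/r_n^2$, so as long as the relevant $r_n$ are positive the sequence $\{d_n\}$ is \emph{strictly increasing}. In the forward direction the base case is $d_0=r_1-r_0\geq 0$ together with $r_0,r_1>0$, both from the hypotheses; the inductive step assumes $r_{n-1},r_n>0$ and $d_{n-1}\geq 0$, and concludes $d_n=d_{n-1}+2c^2/r_n^2>0$, whence $r_{n+1}=r_n+d_n>r_n>0$. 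This propagates $r_{n+1}\geq r_n>0$ for all $n\geq 0$.

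For the backward direction the key point is where the upper bound in the hypothesis is used. A short computation from \eqref{eq:rnegn.rec} gives $r_{-1}=2r_0-r_1+2c^2/r_0^2$, hence $d_{-1}=r_0-r_{-1}=(r_1-r_0)-2c^2/r_0^2$, and the assumed bound $r_1\leq r_0+2c^2/r_0^2$ is exactly the statement $d_{-1}\leq 0$; it moreover yields $r_{-1}=r_0-d_{-1}\geq r_0>0$, establishing the base case. The inductive step (for $n\leq -1$) mirrors the forward one: from $r_{n+1},r_n>0$ and $d_n\leq 0$ one gets $d_{n-1}=d_n-2c^2/r_n^2<0$ and $r_{n-1}=r_n-d_{n-1}>r_n>0$, propagating $r_{n-1}\geq r_n>0$ for all $n\leq 0$. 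This simultaneously closes the induction and pinpoints the role of each hypothesis: $c\neq 0$ drives the strict monotonicity of $\{d_n\}$ (hence non-constancy), the lower bound $r_0\leq r_1$ anchors the forward induction, and the upper bound $r_1\leq r_0+2c^2/r_0^2$ anchors the backward one.
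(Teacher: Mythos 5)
Your proof is correct and takes essentially the same route as the paper's: define the sequences by the stated recursions, check the two equations directly from $z_n-z_{n-1}=c/r_n$ and the second-difference relation, and run a two-sided induction for positivity and monotonicity, with the lower bound $r_0\leq r_1$ anchoring the forward direction and the upper bound $r_1\leq r_0+\frac{2c^2}{r_0^2}$ anchoring the backward one. Your phrasing in terms of the first differences $d_n=r_{n+1}-r_n$ is only a cosmetic repackaging of the paper's inductive inequalities $r_{n+1}\geq r_n>0$ and $r_{n-1}\geq r_n>0$.
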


\begin{proof}
  First, let us note that \eqref{eq:DeltaZn.zero} implies that
  $c=r_n(z_n-z_{n-1})$ is independent of $n$. If $c\neq 0$ then
  $r_n\neq 0$ for $n\in\integers$, which implies that
  \begin{align}
    z_n-z_{n-1} = \frac{c}{r_n},\label{eq:znc}
  \end{align}
  yielding \eqref{eq:zposn.rec} and \eqref{eq:znegn.rec}.  Inserting
  \eqref{eq:znc} into \eqref{eq:DeltaWn.zero} gives
  \eqref{eq:rposn.rec} and \eqref{eq:rnegn.rec}. Conversely, if
  $r_n(z_n-z_{n-1})=c$ it is clear that
  \eqref{eq:rposn.rec}--\eqref{eq:znegn.rec} satisfy
  \eqref{eq:DeltaWn.zero} and \eqref{eq:DeltaZn.zero}.

  Now, assume that $n\geq 1$ and that $r_n\geq r_{n-1}>0$, which is
  true by assumption when $n=1$; let us show that if we define
  $r_{n+1}$ as in \eqref{eq:rposn.rec} then $r_{n+1}\geq
  r_{n}>0$. Equation \eqref{eq:rposn.rec} gives
  \begin{align*}
    r_{n+1}-r_n = r_n-r_{n-1}+\frac{2c^2}{r_{n-1}^2}>0
  \end{align*}
  since $r_n-r_{n-1}\geq 0$ and $c\neq 0$. Moreover, it is clear that
  $r_{n+1}>0$ since $r_n>0$. Thus, we have shown that
  $r_{n+1}\geq r_{n}>0$, and it follows by induction that this is true
  for all $n\geq 0$. For $n\leq -1$ we start by defining
  \begin{align*}
    r_{-1} = 2r_0-r_1+\frac{2c^2}{r_0^2}
  \end{align*}
  and note that
  \begin{align*}
    r_{-1}-r_0=r_0-r_1+\frac{2c^2}{r_0^2}\geq 0
  \end{align*}
  by using the assumption, and furthermore that $r_{-1}>0$ since
  $r_0>0$.  Next, for $n\leq -1$ one assumes that
  $r_{n}\geq r_{n+1}> 0$ and defines $r_{n-1}$ via
  \eqref{eq:rnegn.rec}, implying
  \begin{align*}
    r_{n-1}-r_n=r_n-r_{n+1}+\frac{2c^2}{r_{n}^2}\geq 0
  \end{align*}
  since $r_n-r_{n+1}\geq 0$. Moreover, it is clear that $r_{n-1}>0$
  since $r_n>0$.  Hence, it follows by induction that
  $r_n\geq r_{n+1}>0$ for all $n\leq -1$.
\end{proof}

\noindent Note that, for any solution to the recursion relations, one
may always shift $n$ to obtain another solution with a minimum value
at $r_{n_0}$ for arbitrary $n_0\in\integers$. The next result shows
that these are indeed all possible non-constant solutions.

\begin{proposition}
  Let $(\{z_n\}_{n\in\integers},\{r_n\}_{n\in\integers})$ be a
  positive non-constant solution of \eqref{eq:DeltaWn.zero} and
  \eqref{eq:DeltaZn.zero}. Then there exists $n_0\in\integers$,
  $c\in\reals\backslash\{0\}$ and $\delta\in(-1,1]$ such that
  \begin{enumerate}
  \item $r_{n_0}=\min_{n\in\integers}r_n$,
  \item $\displaystyle r_{n_0+1}=r_{n_0}+(1-\delta)\frac{c^2}{r_{n_0}^2}$,
  \item $r_{n+1}>r_{n}$ for $n\geq n_0+1$,
  \item $r_{n-1}>r_{n}$ for $n\leq n_0$. 
  \end{enumerate}
  Moreover, $\{z_n\}_{n\in\integers}$ is strictly increasing
  if $c>0$ and strictly decreasing if $c<0$.  
\end{proposition}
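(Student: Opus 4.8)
The plan is to base everything on the \emph{discrete convexity} of $\{r_n\}$ that the recursion forces. Exactly as in the proof of Proposition~\ref{prop:rz.recursion}, a non-constant solution has $c\neq 0$, and \eqref{eq:DeltaZn.zero} then gives $z_n-z_{n-1}=c/r_n$ with every $r_n>0$. Writing $d_n:=r_{n+1}-r_n$ for the forward difference and substituting into \eqref{eq:DeltaWn.zero}, I obtain
\begin{align*}
  d_n-d_{n-1}=r_{n+1}+r_{n-1}-2r_n=\frac{2c^2}{r_n^2}>0,
\end{align*}
so the sequence $\{d_n\}_{n\in\integers}$ is \emph{strictly increasing}. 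Every conclusion of the proposition will then be read off from the unique location where $\{d_n\}$ changes sign, so the task reduces to controlling that sign change.

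The step I expect to be the main obstacle is ruling out that $\{d_n\}$ has constant sign; this is precisely where the positivity hypothesis $r_n>0$ must be used. Suppose $d_n>0$ for all $n$. Then $r_n$ is increasing, so $r_n\le r_0$ for every $n\le 0$ and hence $2c^2/r_n^2\ge 2c^2/r_0^2=:\alpha>0$ for all $n\le 0$. Telescoping,
\begin{align*}
  d_0-d_{M-1}=\sum_{n=M}^{0}(d_n-d_{n-1})=\sum_{n=M}^{0}\frac{2c^2}{r_n^2}\ge (1-M)\alpha,
\end{align*}
which tends to $+\infty$ as $M\to-\infty$, forcing $d_{M-1}\to-\infty$ and contradicting $d_n>0$. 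The case $d_n<0$ for all $n$ is ruled out symmetrically by summing toward $+\infty$. Hence $d_n<0$ for all sufficiently negative $n$ and $d_n>0$ for all sufficiently large $n$; a uniform bound $|d_n|\ge\alpha>0$ at either tail gives linear growth, so $r_n\to+\infty$ as $n\to\pm\infty$ and $\min_{n}r_n$ is attained at finitely many indices. I then set $n_0:=\min\{n:r_n=\min_m r_m\}$, which is well defined.

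With this choice, conclusion (1) holds by definition. Minimality gives $d_{n_0}\ge 0\ge d_{n_0-1}$, and since $n_0$ is the \emph{smallest} minimizer, $r_{n_0-1}>r_{n_0}$, i.e.\ $d_{n_0-1}<0$ strictly. As $\{d_n\}$ is strictly increasing, $d_n>d_{n_0}\ge 0$ for $n\ge n_0+1$ yields (3), and $d_{n-1}\le d_{n_0-1}<0$ for $n\le n_0$ yields (4). For the remaining claim I define $\delta$ through (2), i.e.\ $d_{n_0}=r_{n_0+1}-r_{n_0}=(1-\delta)c^2/r_{n_0}^2$; since $d_{n_0}\ge 0$ this forces $\delta\le 1$. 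Evaluating the convexity relation at $n_0$,
\begin{align*}
  d_{n_0}-d_{n_0-1}=\frac{2c^2}{r_{n_0}^2}\qquad\Longrightarrow\qquad d_{n_0-1}=-(1+\delta)\frac{c^2}{r_{n_0}^2},
\end{align*}
and $d_{n_0-1}<0$ then forces $1+\delta>0$, i.e.\ $\delta>-1$; thus $\delta\in(-1,1]$.

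Finally, the \emph{moreover} statement is immediate: $z_n-z_{n-1}=c/r_n$ has the sign of $c$ because $r_n>0$, so $\{z_n\}$ is strictly increasing when $c>0$ and strictly decreasing when $c<0$. The only delicate point in the whole argument is the telescoping estimate ruling out a monotone $\{r_n\}$; once discrete convexity and positivity are combined there, the location of $n_0$ and the bookkeeping for $\delta$ are routine.
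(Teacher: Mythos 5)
Your proof is correct and rests on essentially the same mechanism as the paper's: the discrete convexity relation $r_{n+1}-2r_n+r_{n-1}=2c^2/r_n^2>0$, a uniform lower bound on these increments wherever $\{r_n\}$ stays bounded (this is exactly the paper's ``$2c^2/r_n^2$ is bounded from below by $2c^2/r_N^2$'' step, which you realize via the telescoping sums at the tails), the resulting sign change in the first differences, and the same bookkeeping at $n_0$ to get $\delta\in(-1,1]$ and the monotonicity of $\{z_n\}$. The only difference is organizational rather than mathematical: the paper walks forward from a strictly negative difference until the sign flips, while you rule out constant-sign differences by contradiction and then take $n_0$ to be the least minimizer of $\{r_n\}$ --- a choice that also cleanly handles the boundary case $\delta=1$ of two adjacent minimizers.
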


\begin{proof}
  First, let us note that \eqref{eq:DeltaZn.zero} implies that
  there exists $c\in\reals$ such that
  \begin{align*}
    r_n(z_n-z_{n-1})=c
  \end{align*}
  for all $n\in\integers$. Moreover, if $z_n-z_{n-1}=0$ for some
  $n\in\integers$ then $c=0$, which implies that $z_n-z_{n-1}=0$ for
  all $n$ (by using that $r_n>0)$. Thus, since the solution is assumed
  to be non-constant, it follows that $c\neq 0$. Equations
  \eqref{eq:DeltaWn.zero} and \eqref{eq:DeltaZn.zero} can then be
  written as
  \begin{align}
    &r_{n+1}-r_n=r_n-r_{n-1} + \frac{2c^2}{r_n^2}\label{eq:rndiff}\\
    &z_n-z_{n-1} = \frac{c}{r_n}\label{eq:zndiffc}
  \end{align}
  since $r_n>0$. It is clear from \eqref{eq:rndiff} that if
  $r_N-r_{N-1}\geq 0$ for some $N\in\integers$, then $r_{n+1}>r_n$
  for all $n\geq N$. Similarly, if $r_{N+1}-r_N\leq 0$ then
  $r_{n-1}>r_n$ for all $n\leq N$.
  
  Since the solution is non-constant, there exists $N\in\integers$
  such that $r_{N}-r_{N-1}\neq 0$; assume that $r_{N}-r_{N-1}<0$ (the
  argument for $r_N-r_{N-1}>0$ is completely analogous). It follows
  from \eqref{eq:rndiff} that
  \begin{align*}
    r_N-r_{N-1}<r_N-r_{N-1}+\frac{2c^2}{r_N^2}=r_{N+1}-r_{N}
  \end{align*}
  and, by induction
  \begin{align*}
    r_N-r_{N-1}<r_{N+1}-r_{N}<r_{N+2}-r_{N+1}<\cdots
  \end{align*}
  implying that there exists $n_0\in\integers$ such that
  $r_{n_0}-r_{n_0-1}<0$ and $r_{n_0+1}-r_{n_0}\geq 0$ (since
  $2c^2/r_n^2$ is bounded from below by $2c^2/r_N^2>0$). By the
  previous argument one may also conclude that $r_{n-1}>r_{n}$ for
  $n\leq n_0$ and $r_{n+1}>r_n$ for $n\geq n_0+1$. Moreover, it
  follows that $r_{n_0}=\min_{n\in\integers}r_n$. For $n=n_0$,
  \eqref{eq:rndiff} gives
  \begin{align*}
    r_{n_0+1}-r_{n_0}=r_{n_0}-r_{n_0-1}+\frac{2c^2}{r_{n_0}^2}
  \end{align*}
  implying that $r_{n_0}<r_{n_0}+2c^2/r_{n_0}^2$ since
  $r_{n_0}-r_{n_0-1}<0$, which, together with
  $r_{n_0+1}-r_{n_0}\geq 0$ gives
  \begin{align*}
    r_{n_0}\leq r_{n_0+1}< r_{n_0}+\frac{2c^2}{r_{n_0}^2}. 
  \end{align*}
  Finally, it follows directly from \eqref{eq:zndiffc} that
  $\{z_n\}$ is strictly increasing if $c>0$ and strictly decreasing
  if $c<0$.
\end{proof}
\noindent
What about the \emph{general} solution of \eqref{eq:cat.zprzpD}:
\begin{align}
  &r''r^2=2D^2\implies r'^2=-\frac{4D^2}{r}+\alpha\implies\\
  &p = \int\frac{dr}{\sqrt{\alpha-\frac{4D^2}{r}}}\underset{r=\frac{4D^2}{\alpha}\cosh^2q}{\overset{q>0}{=}}
    \frac{8D^2}{\alpha^{3/2}}\int\cosh^2qdq\implies\\
  &p-p_0 = \frac{8D^2}{2\alpha^{3/2}}\parab{q+\frac{1}{2}\sinh 2q}\qquad
    \frac{dp}{dq} = \frac{8D^2}{\alpha^{3/2}}\cosh^2 q\notag\\
  &z' = \frac{D}{r} = \frac{\alpha}{4D}\frac{1}{\cosh^2q}=\frac{\alpha}{4D}\frac{dq}{dp}\frac{8D^2}{\alpha^{3/2}}\implies
    z=\frac{2D}{\sqrt{\alpha}}q(p)+z_0\\
  &(x_1+ix_2)e^{-i\varphi} = \frac{|2D|}{\sqrt{\alpha}}\cosh q(p),\notag
\end{align}
which is a catenoid (with $g(p,u)=\alpha/4$); with $\frac{2D}{\sqrt{\alpha}}=\pm a$ one gets
\eqref{eq:cat.rpzp}, but the relation between $p$ and $q$ having an
extra factor of $\frac{2}{\sqrt{\alpha}}$, i.e.
\begin{align}
  \frac{dp}{dq}=\frac{2}{\sqrt{\alpha}}\sqrt{g}(q) 
\end{align}
instead of $\sqrt{g}$. Note that the discretization of
\eqref{eq:cat.rpzp} satisfies the recursion relations
\eqref{eq:cat.znrn} and \eqref{eq:cat.znrnc} also for $n\to\infty$
(fixed $\hbar$), which can be seen as follows: \eqref{eq:cat.dvtdv}
implies that for large $p=\vt-\vt_0$
\begin{align}
  q=v-v_0\approx\frac{1}{2}\ln p+\frac{1}{2}\ln\frac{8}{a^2}
  -a^2\frac{\ln\para{\frac{8p}{a^2}}}{8p}+\cdots,
\end{align}
and therefore, for large $|n|$
\begin{align}
  r_n=a^2\cosh^2\paraa{q(-\hbar n)}\sim 2\hbar|n|-\frac{a^2}{2}\ln|n|,
\end{align}
indeed satisfying \eqref{eq:cat.rndiffc} to leading order ($c^2=a^2\hbar^2$).

\section{Enneper surfaces}

\noindent
For Enneper-type surfaces the data entering the Weierstrass
representation
\begin{align}
  \xv = \Re\int\vec{\varphi}(z)dz
\end{align}
are
\begin{align}
  \vec{\varphi} = \paraa{(1-z^2)z^{N-2},i(1+z^2)z^{N-2},2z^{N-1}}
  \qquad(N\geq 2)
\end{align}
the simplest case $(N=2)$ giving
\begin{equation}
  \begin{split}
    &\xt(u,v) = \paraa{u-\tfrac{1}{3}u^3+uv^2,
      \tfrac{1}{3}v^3-v-vu^2,u^2-v^2}\\
    &(g_{ab}) = (1+u^2+v^2)^2
    \begin{pmatrix}
      1 & 0 \\ 0 & 1
    \end{pmatrix},
  \end{split}
\end{equation}
while more generally, with $s:=|z|^2=u^2+v^2$,
\begin{align}
  \sqrt{g} = \frac{1}{2}\vec{\varphi}\vec{\varphi}^\ast =
  s^{N-2}(1+s)^2.
\end{align}
Trying to solve \eqref{eq:J} with the Ansatz
\begin{equation}
  \begin{split}
    &\ut = uh(u^2+v^2)\quad\paraa{=u\sqrt{\tfrac{\tilde{s}}{s}}}\\
    &\vt = vh(u^2+v^2)\quad\paraa{=v\sqrt{\tfrac{\tilde{s}}{s}}}
  \end{split}
\end{equation}
gives
\begin{align}
  \frac{d\tilde{s}}{ds} =s^{N-2}(1+s)^2=s^N+2s^{N-1}+s^{N-2}
\end{align}
%when $h=\sqrt{H}$.
Let us now restrict to $N=2$, giving
\begin{align}
  % &\d_u\xv = \paraa{1-u^2+v^2,-2uv,2u}\qquad
  % \d_v\xv = \paraa{2uv,v^2-1-u^2,-2v}\notag\\
  &\tilde{s} = \frac{1}{3}s^3+s^2+s+c
  =\frac{1}{3}(1+s)^3+c-\frac{1}{3}
\end{align}
which can be easily inverted to give
\begin{align}
  s = (3\tilde{s}+1-3c)^{1/3}-1,
\end{align}
hence yielding expressions for the inverse transformation
\begin{align}\label{eq:uv.utvt}
  &u = \ut\sqrt{\frac{s}{\tilde{s}}}=\ut\tilde{h}(\tilde{s})\qquad
  v =\vt\sqrt{\frac{s}{\tilde{s}}} =\vt\tilde{h}(\tilde{s})
%  &\paraa{\tilde{s}\tilde{H}}'=(s+1)^{-2}=(3\tilde{s}+1-3c)^{-2/3}\notag
\end{align}
% \begin{align}
%   &u = \ut\sqrt{\frac{s}{\tilde{s}}}=\ut\sqrt{\tilde{H}(\tilde{s})}\qquad
%   v = \vt\sqrt{\frac{s}{\tilde{s}}}=\vt\sqrt{\tilde{H}(\tilde{s})}\\
%   &\paraa{\tilde{s}\tilde{H}}'=(s+1)^{-2}=(3\tilde{s}+1-3c)^{-2/3}\notag
% \end{align}
which is needed to obtain the $X_i$. Defining $w=x_1+ix_2$ and $z=u+iv$, resp.
\begin{align*}
  W = X_1+iX_2\qquad\Lambda = U+iV
\end{align*}
one finds for this (Enneper) case
\begin{align}
  \{w,\bar{w}\}=\frac{2}{i}\parac{1-\frac{2}{1+z\bar{z}}}\qquad
  \{x_3,w\} = \frac{2}{i}\frac{z}{1+z\bar{z}}
\end{align}
satisfying
\begin{equation}
  \begin{split}
    &\frac{1}{2}\{\{w,\bar{w}\},w\}=\{\{x_3,w\},x_3\}\\
    &\{\{x_3,w\},\bar{w}\}+\{\{x_3,\bar{w}\},w\} = 0
  \end{split}
\end{equation}
while \eqref{eq:X.double.com} resp.
\begin{equation}
  \begin{split}
    &\frac{1}{2}[W,[W,\Wd]]=[X_3,[X_3,W]]\\
    &[[X_3,W],\Wd]+[[X_3,\Wd],W] = 0
  \end{split}
\end{equation}
should (in leading order) be solved by
\begin{align}
  W = \Ld-\frac{1}{3}\L^3\qquad X_3=\frac{1}{2}\paraa{\L^2+{\L^\dagger}^2}
\end{align}
with
\begin{align}
  \L = f(\hat{N})Tf(\hat{N})
\end{align}
where
\begin{align}
  T\ket{n} = \ket{n-1}\quad\text{ and }\quad
  \hat{N}\ket{n}=n\ket{n}\qquad(n\in\naturals_0)
\end{align}
is the number operator. Due to (cp. \eqref{eq:uv.utvt}) 
\begin{align}
  &z=\tilde{z}\tilde{h}(\tilde{s})%=\frac{\tilde{z}}{\sqrt{\tilde{s}}}\sqrt{\tilde{s}\tilde{H}(\tilde{s})}
    =:f(\tilde{s})\frac{\tilde{z}}{\sqrt{\tilde{s}}}f(\tilde{s})\\
  &f(\tilde{s}) = \paraa{(3\tilde{s}+1-3c)^{1/3}-1}^{\frac{1}{4}}.
\end{align}
% and for the non-commutative case it is convenient to fix the
% integration constant $c$ to be
% \begin{align}
%   c = -\hbar\qquad(\Rightarrow f(\tilde{s}_{-1}=-\hbar)=0)
% \end{align}
% the ``boundary condition'' (onesidedness of $\L$) is automatically
% taken care of by explicitly having
% \begin{align}
%   f_{-1} = 0 \qquad \lambda_0=0.
% \end{align}

% \subsection{Enneper asymptotics}

% Enneper surface:
% \begin{align*}
%   &w = x_1+ix_2=\bar{z}-\frac{1}{3}z^3\qquad
%   x_3 = \frac{1}{2}(z^2+\bar{z}^2)\qquad z=u+iv\\
%   &\sqrt{g}=(1+s)^2\qquad s=u^2+v^2.
% \end{align*}
% Reparametrize $u+iv=z\to\zt=\ut+i\vt=zh(s)$
\noindent
Analogously, 
\begin{align*}
  % &\abs{\frac{\d(\ut,\vt)}{\d(u,v)}}=\sqrt{g}\implies
  %   \st=\frac{1}{3}(1+s)^3+c-\frac{1}{3}=\ut^2+\vt^2\implies\\
    &z = \frac{\zt}{\sqrt{\st}}\sqrt{\paraa{3(\st-c)+1}^{1/3}-1}
%    &s(\st)=\paraa{3(\st-c)+1}^{1/3}-1
\end{align*}
suggests
\begin{align}
  &Z\ket{n} = \sqrt{\paraa{6\hbar n+3\hbar+1-3c}^{1/3}-1}\sqrt{\frac{2\hbar n}{2\hbar n+\hbar}}\ket{n-1}=\Lambda\ket{n}=\lambda_n\ket{n-1}
\end{align}
due to
\begin{equation}
  \begin{split}
    &\St\ket{n} = 2\hbar\paraa{n+\tfrac{1}{2}}\ket{n}\\
    % 3\paraa{2\hbar(n+\tfrac{1}{2})-c+1}\\
    &\Ut+i\Vt=\sqrt{2\hbar}a=\sqrt{\hbar}\paraa{\d_x+x}\\
    &a\ket{n} = \sqrt{n}\ket{n-1}\qquad
    a^\dagger\ket{n} = \sqrt{n+1}\ket{n+1}\qquad
    a^\dagger a\ket{n} = n\ket{n}.
  \end{split}
\end{equation}
\begin{align}      
  &|\lambda_n|^2
    =\bracketb{\paraa{6\hbar n+3\hbar+1-3c}^{1/3}-1}\frac{2n}{2n+1}
    \qquad n=0,1,2,\ldots
\end{align}
satisfies to leading (and sub-leading if $n\to\infty$, $\hbar$ fixed) order the recursion relations
\begin{align}
  &(\sigma_{n+1}-\sigma_n)\paraa{2+\sigma_n+\sigma_{n+1}}^2=8\hbar
  \equivalent [\Lambda,\Ld] = \frac{2\hbar}{\paraa{1+\tfrac{1}{2}(\L\Ld+\Ld\L)}^2}\label{eq:enas.sigma.recursion}\\
  &\sigma_0=0\quad\sigma_1=2\hbar-2\hbar^2+\cdots\implies c=0.\notag
\end{align}
Inserting 
\begin{align*}
  X_1+iX_2=W=\Ld-\frac{1}{3}\L^3\qquad X_3=\frac{1}{2}\parab{\L^2+{\Ld}^2}
\end{align*}
into $[X_i,[X_i,X_j]]=0$ gives
\begin{align}
  &[\L,[\L,\Ld]-\tfrac{1}{9}[\L^3,{\Ld}^3]]
  +[\L,\tfrac{1}{2}[\L^2,{\Ld}^2]+\tfrac{1}{6}[\L^2,{\Ld}^3]]=0\\
  &[\L^2,[\L,\Ld]+\tfrac{1}{9}[\L^3,{\Ld}^3]]
    +\frac{2}{3}[\L^3,[\L,{\Ld}^2]]=0\\
  &\frac{1}{3}[\L^3,\tfrac{1}{9}[\L^3,{\Ld}^3]+\tfrac{1}{2}[\L^2,{\Ld}^2]-[\L,\Ld]]
    +\frac{1}{2}[\L^2,[\L^2,\Ld]]=0.
\end{align}
In the classical limit the above equations are satisfied, using
\begin{align}\label{eq:enas.zzb}
 \{z,\zb\} =-\frac{2i}{(1+z\zb)^2},
\end{align}
as reducible to $\{z,1\}=0$, $\{z^2,1\}=0$,
$\{z^3,1\}=0$. Q-analogue of \eqref{eq:enas.zzb}
\begin{align}
  [\L,\Ld] = \frac{2\hbar}{\paraa{1+\tfrac{1}{2}(\L\Ld+\Ld\L)}^2}
\end{align}
has solutions of the form
\begin{align*}
  \L =
  \begin{pmatrix}
    0 & \lambda_1\\
    0 & 0 & \lambda_2\\
    & &  \ddots
  \end{pmatrix}
\end{align*}
with $\sigma_n=|\lambda_n|^2$ satisfying
\begin{align*}
  (\sigma_{n+1}-\sigma_n)\paraa{2+\sigma_{n+1}+\sigma_n}^2=8\hbar
\end{align*}
with unique solution $\sigma_0=0<\sigma_1<\sigma_2<\cdots$.

\section{Helicoid}

\noindent
Parametrizing the helicoid in $\reals^3$ as
\begin{align*}
  \xv(u,v) = (\sinh(v)\cos(u),\sinh(v)\sin(u),u)
\end{align*}
gives $(g_{ab})=\cosh^2(u)\mid$ and $\sqrt{g}=\cosh^2(u)$. A solution to
\begin{align*}
  \abs{\frac{\d(\ut,\vt)}{\d(u,v)}} = \sqrt{g}=\cosh^2(u)
\end{align*}
is again given by $\ut=u$ and $\vt=v/2+\sinh(2v)/4$, which implies
that
\begin{align*}
  &w = x^1+ix^2 = \sinh(v(\vt))e^{i\ut}\\
  &z = x^3 = \ut.
\end{align*}
For the helicoid, choosing a representation of $\Ut$ and $\Vt$ on
smooth functions as
\begin{align*}
  &(\Ut f)(x) = i\hbar f'(x)\\
  &(\Vt f)(x) = xf(x),
\end{align*}
one may interpret $e^{i\Ut}$ as a shift
operator. Defining the following operators
\begin{alignat*}{2}
  &(Wf)(x) = w(x)f(x-\hbar) &\qquad
  &(W^\dagger f)(x)=\overline{w(x+\hbar)}f(x+\hbar)\\
  &(Zf)(x) = i\hbar f'(x) &
  &(Z^\dagger f)(x)=i\hbar f'(x),
\end{alignat*}
it follows that $\Delta(W)=\Delta(Z)=0$ is equivalent to the single equation
\begin{align}\label{eq:helicoid.equation}
  w(x)\paraa{2|w(x)|^2-|w(x-\hbar)|^2-|w(x+\hbar)|^2} = 2\hbar^2w''(x).
\end{align}
If $w(x-\hbar)\neq 0$ one may write the above equation as
\begin{align}\label{eq:helicoid.eq.2hbar}
  |w(x)|^2 = 2|w(x-\hbar)|^2-|w(x-2\hbar)^2|-2\hbar^2\frac{w''(x-\hbar)}{w(x-\hbar)}.
\end{align}

\section{Complex hyperbola}\label{sec:hyperbola}

\noindent
In contrast to the previous examples, let us now consider a surface in
$\reals^4$. Parametrizing $(x_1+ix_2)(x_3+ix_4)=\alpha\in\reals$
($\Leftrightarrow$ $x_1x_3-x_2x_4=\alpha$, $x_1x_4+x_2x_3=0$)
%resp. $\uv\cdot \vv=0,\uv\times\vv=1$
as
$\xv(t,u)=(t\cos u,t\sin u,\tfrac{\alpha}{t}\cos u,-\tfrac{\alpha}{t}\sin
u)_{t>0}$, one gets
\begin{align*}
  (g_{ab}) = \para{1+\tfrac{\alpha^2}{t^4}}
  \begin{pmatrix}
    1 & 0 \\ 0 & t^2
  \end{pmatrix}\qquad
                 \sqrt{g}=t\para{1+\tfrac{\alpha^2}{t^4}} = t+\alpha^2t^{-3}
\end{align*}
Using the general strategy, namely reparametrizing
$\xv(t,u)=\tilde{\xv}(\tt,u)$ with
\begin{align}
  \tfrac{d\tt}{dt}=\sqrt{g}=t+\alpha^2t^{-3},
\end{align}
i.e. $\tt=\frac{1}{2}t^2-\frac{\alpha^2}{2t^2}+\gamma$, respectively
\begin{align}
  t(\tt)=\sqrt{\tt-\gamma+\sqrt{(\tt-\gamma)^2+\alpha^2}},
\end{align}
$\xt_1+i\xt_2=t(\tt)e^{iu}=e^{iu}t(\tt)$,
$\xt_3+i\xt_4=\frac{1}{t(\tt)}e^{-iu}$, leading to (in the
representation $\Uh=\varphi$, $\Vh=\tth=-i\hbar\frac{\d}{\d\varphi}$
acting on $e^{-in\varphi}\hat{=}\ket{n}$)
\begin{align}
  W\ket{n} = (X_1+iX_2)\ket{n} = w_n\ket{n-1}\qquad
  (X_3+iX_4)\ket{n} = \frac{\alpha}{w_{n+1}}\ket{n+1}
\end{align}
with $w_n=t(-\hbar n)$, and %$\Rightarrow$
% $r_n=|w_n|^2=\sqrt{\hbar^2n^2+t_0^4}-\hbar n$
\begin{align}\label{eq:rnwngamma}
  r_n=|w_n|^2=-\hbar n-\gamma+\sqrt{(-\hbar n-\gamma)^2+\alpha^2} 
\end{align}
% which indeed satisfies
% (exactly) the previously considered
% $[Z^\dagger,Z]+[\frac{1}{Z^\dagger},\frac{1}{Z}]=2\hbar$,
% resp. $r_n-r_{n+1}+\frac{1}{r_{n+1}}-\frac{1}{r_n}=2\hbar$.
giving exact solutions to \eqref{eq:X.double.com}; we will come back
to this example in Section~\ref{sec:quantized.complex.curves}.

\section{Remarks and conjectures}\label{sec:remarks}

\subsection{I -- Quantization}

\begin{center}
  \begin{tabular}{lcl}
    \parbox[t]{55mm}{$(M,\omega)$ compact $C^\infty$ symplectic manifold of dimension $2d$} & $\rightsquigarrow$ & \parbox[t]{7cm}{Hilbert space $\mathcal{H}$\\ $\dim\mathcal{H}=\frac{1}{(2\pi\hbar)^d}\int_M\frac{\omega^d}{d!}\cdot(1+O(\hbar))$}\\
    $f\in C^\infty(M,\reals)$ & $\rightsquigarrow$ & \parbox[t]{7cm}{self-adjoint $\hat{f}\in\End(\mathcal{H})$\\ $\hat{f}\cdot\hat{g}=\widehat{fg}+\frac{i\hbar}{2}\widehat{\{f,g\}}+O(\hbar^2)$}\\
    $\varphi\in C^\infty(M,\complex)$, $|\varphi|=1$ & $\rightsquigarrow$ &
                                                                            \parbox[t]{7cm}{unitary operator $\hat{\varphi}\in U(\mathcal{H})$\\
    $\hat{\varphi}_1\cdot\hat{\varphi}_2=\widehat{\varphi_1\varphi_2}\cdot e^{\frac{i\hbar}{2}\{\log\varphi_1,\log\varphi_2\}+O(\hbar^2)}$\\(well-defined locally on $M$)}
  \end{tabular}
\end{center}
\begin{align*}
  &i\hbar\{\cdot,\cdot\}\simeq [\cdot,\cdot]\\
  &\frac{1}{2\pi\hbar}\int_M\cdot\quad\simeq \Tr(\cdot)\qquad\text{if $d=1$.}
\end{align*}
Example: $M=U(1)\times U(1)$, $\varphi_1=e^{i\theta_1}$,
$\varphi_2=e^{i\theta_2}$ with
$\theta_1,\theta_2\in \reals/2\pi\integers$, and
$\omega=(d\theta_1\wedge d\theta_2)/2\pi$, $\hbar=1/N$:
\begin{align*}
  \hat{\varphi}_1 =
  \begin{pmatrix}
    0 & 0 & \cdots & 0 & 1\\
    1 & 0 & \cdots & 0 & 0\\
    0 & 1 & \cdots & 0 & 0\\
    \vdots & \vdots &\ddots &\vdots &\vdots\\
    0 & 0 & \cdots & 1 & 0
  \end{pmatrix}\qquad\quad
                         \hat{\varphi}_2=
                         \begin{pmatrix}
                           1 & & \\
                           & e^{2\pi i/N} &\\
                           & & \ddots &\\
                           & & & e^{2\pi i(N-1)/N}
                         \end{pmatrix}.
\end{align*}
\subsection{II} Map $\varphi:M\to\reals^n$ given by $x_1,\ldots,x_n\in C^\infty(M,\reals)$. Functional
\begin{align*}
  S_d(\varphi) = \vol(M)\int_M\sum_{i<j}\{x_i,x_j\}^2
  =\int_M 1\cdot\int_M\sum_{i<j}\{x_i,x_j\}^2
\end{align*}
invariant under rescaling of symplectic form $\omega\to\lambda\omega$
for $\lambda\in\reals_{>0}$. The quantum version is given by
$X_1,\ldots,X_n\in\Mat(N\times N,\complex)$ with $X_i^\dagger = X_i$
(think as $X_i=\hat{x}_i$) and
\begin{align*}
  S_q = 2\pi\hbar\Tr\mid_{N\times N}\cdot
  2\pi\hbar\Tr\sum_{i<j}\para{\frac{1}{i\hbar}[X_i,X_j]}^2
  =-(2\pi)^2\cdot N\cdot \Tr\sum_{i<j}[X_i,X_j]^2.
\end{align*}
Put constraints to make this compact: e.g. sphere $\sum x_i^2=1$,
ellipsoid $\sum(x_i/r_i)^2=1$ (where $r_1,\ldots,r_n>0$). Quantum
sphere: $\sum X_i^2=\mid_{N\times N}$, quantum ellipsoid
$\sum X_i^2/r_i^2=\mid_{N\times N}$.

\begin{conjecture}
  Critical values of $S_q$ with given constraint $\approx$ critical
  values of $S_d$ as $N\to+\infty$ (these are essentially squares of  volumes of
  minimal surfaces).
\end{conjecture}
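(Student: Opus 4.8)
\medskip

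The plan is first to turn the informal statement into a precise convergence claim and then to establish it by two matching one-sided estimates. Fix the sphere constraint. For each $N$, let $\Cq^{(N)}\subset\reals_{\geq 0}$ be the set of critical values of $S_q$ on $\pb{(X_1,\ldots,X_n):X_i=X_i^\dagger,\ \sum_i X_i^2=\mid}$ modulo the gauge group $U(N)$ (simultaneous conjugation $X_i\mapsto UX_iU^\dagger$, under which both $S_q$ and the constraint are invariant), and let $\Cq^{(\infty)}$ be the set of critical values of $S_d$ over maps $\varphi\colon M\to\reals^n$ with $\sum_i x_i^2=1$, ranging over symplectic surfaces $(M,\omega)$. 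The assertion to be proved is that $\Cq^{(N)}\to\Cq^{(\infty)}$ in the Hausdorff sense on compact subsets of $\reals$: every classical critical value is a limit of quantum ones, and every convergent sequence of quantum critical values has its limit in $\Cq^{(\infty)}$. The parenthetical remark is checked first and independently: at a nondegenerate classical critical point $\omega$ is proportional to the induced area form, whence $\sum_{i<j}\{x_i,x_j\}^2$ is constant and the Binet--Cauchy identity gives $S_d=\vol(M)\int_M\sum_{i<j}\{x_i,x_j\}^2=\operatorname{Area}(M)^2$, confirming ``squares of volumes.''

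The first (easier) direction is classical $\to$ quantum. Given a nondegenerate critical point of $S_d$ I would quantize the coordinate functions by Berezin--Toeplitz quantization, using the dictionary of Subsection~I with $\hbar=1/N$, to produce self-adjoint $X_i^{(0)}=\hat{x}_i\in\End(\mathcal{H})$ obeying the constraint and, by the correspondence $i\hbar\{\cdot,\cdot\}\simeq[\cdot,\cdot]$, the constrained Euler--Lagrange equation $\sum_i[X_i,[X_i,X_j]]=\thalf\pb{\Lambda,X_j}$ up to an error of order $\hbar^2$. I would then correct $X^{(0)}$ to an exact critical point by a Newton/implicit-function iteration in $\hbar$, exactly as the explicit catenoid and Enneper solutions of the previous sections are built order by order; the step that must be controlled is the invertibility of the Hessian of $S_q$ restricted to the constraint and transverse to the $U(N)$-orbit, i.e. a spectral gap for the linearized operator uniform in $N$. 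Granting this, $S_q(X^{(N)})\to S_d$ at the chosen classical value, giving $\Cq^{(\infty)}\subseteq\liminf\Cq^{(N)}$.

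The second direction, quantum $\to$ classical, is where the real difficulty lies. Given critical points $X_i^{(N)}$, the constraint yields the uniform bound $\norm{X_i^{(N)}}\leq 1$, so one may try to extract a limit along a subsequence. The goal is to interpret the sequence of matrix algebras as Berezin symbols on a limiting symplectic surface and to recover smooth functions $x_i$ satisfying the classical equations with $S_q(X^{(N)})\to S_d$. The main obstacle is the \emph{emergence of geometry}: there is no a priori reason that an arbitrary sequence of critical matrices converges to a \emph{smooth} map of a smooth symplectic surface rather than to a singular or lower-dimensional (``degenerate'') object of the kind already flagged in the Introduction. I would attack this by (i) a compactness statement in Rieffel's quantum Gromov--Hausdorff framework, giving a limiting compact quantum metric space, together with (ii) an elliptic-type bootstrap driven by the critical-point equation $\sum_i[X_i,[X_i,X_j]]=\thalf\pb{\Lambda,X_j}$ to control iterated commutators $[X_{i_1},[X_{i_2},\cdots]]$ and thereby promote the limit to a $C^\infty$ structure; here the equation plays the role of a nonlinear harmonic-map/minimal-surface system whose regularity theory must be transported to the noncommutative setting. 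Ruling out collapse of the symplectic volume (equivalently, a lower bound keeping $\dim\mathcal{H}\sim(2\pi\hbar)^{-1}\vol(M)$ from degenerating) is the crux, and is precisely the point at which the conjecture could fail for pathological sequences; a clean theorem will likely require restricting to the class of critical points satisfying a uniform commutator bound, under which both inclusions close up to give $\Cq^{(N)}\to\Cq^{(\infty)}$.
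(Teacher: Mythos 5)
This statement is one of the paper's \emph{conjectures}: the paper offers no proof of it, not even a sketch, so there is no ``paper proof'' to compare against --- your text has to stand on its own, and it does not. It is a research program in which both central analytic steps are assumed rather than established, as you yourself acknowledge. In the classical-to-quantum direction, the Newton/implicit-function correction of the Berezin--Toeplitz quantization requires invertibility of the Hessian of $S_q$ (on the constraint set, transverse to the $U(N)$-orbit) with a spectral gap \emph{uniform in} $N$; you write ``granting this,'' but that uniform gap is the entire content of this direction. Nothing in the paper supplies it: the catenoid and Enneper constructions you invoke are order-by-order formal series in $\hbar$ (for Enneper the relevant recursion \eqref{eq:enas.sigma.recursion} is only verified to leading and sub-leading order), and order-by-order solvability is far weaker than a quantitative, $N$-uniform implicit function theorem. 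In the quantum-to-classical direction you state explicitly that ruling out collapse to degenerate limits ``is the crux'' and ``is precisely the point at which the conjecture could fail''; your proposed repair --- restricting to sequences with a uniform commutator bound --- proves at best a different statement (note that in the paper's companion torus conjecture, the bound $\norm{\Phi_i\Phi_j-\Phi_j\Phi_i}\leq\operatorname{const}/N$ appears as part of the \emph{conclusion}, not as a hypothesis).

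There is also a concrete error in your formalization: the two-sided Hausdorff convergence $\Cq^{(N)}\to\Cq^{(\infty)}$ is false if $\Cq^{(\infty)}$ means critical values coming from (connected) minimal surfaces, because the quantum side is closed under direct sums while single surfaces are not. If $(Y_i)$ and $(Z_i)$ are critical and satisfy the constraint in dimensions $N_1,N_2$, with values $v_1,v_2$, then $X_i:=Y_i\oplus Z_i$ is again critical (take the block-diagonal Lagrange multiplier $\tfrac{N}{N_1}\Lambda_1\oplus\tfrac{N}{N_2}\Lambda_2$) and satisfies the constraint, with
\begin{align*}
  S_q(X)=-(2\pi)^2 N\parab{\Tr\sum_{i<j}[Y_i,Y_j]^2+\Tr\sum_{i<j}[Z_i,Z_j]^2}
  =N\parac{\frac{v_1}{N_1}+\frac{v_2}{N_2}},\qquad N=N_1+N_2.
\end{align*}
Letting $N_1/N\to\theta$ produces limit values $v_1/\theta+v_2/(1-\theta)$ --- for instance $4v_1$ for two equal blocks, and in fact a set dense in $[(\sqrt{v_1}+\sqrt{v_2})^2,\infty)$ as $\theta$ varies --- which are generically not squares of areas of single minimal surfaces. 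So before any analysis can start one must fix the normalization: either enlarge the classical side to disconnected domains with independently scaled symplectic forms (which reproduces exactly these values), or weaken ``$\approx$'' to allow combinations, as the paper itself does in the torus version ($+\infty$ or positive $\integers$-linear combinations of classical critical values). As written, your precise claim is refuted by this elementary construction, so even completing your two-sided program would not establish the conjecture in the form you stated it.
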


\noindent
Note that $d>1$ does not survive, wrong scaling.

\subsection{III}

\underline{Version}: target = flat torus $U(1)\times\cdots\times
U(1)=(S^1)^n$. Map $M\to(S^1)^n$ is given by
$\varphi_1,\ldots,\varphi_n\in C^\infty(M,\complex)$ with
$|\varphi_i|^2=1$.
\begin{align*}
  S_d = \int_M\sum_{i<j}\{\log\varphi_i,\log\varphi_j\}^2\cdot\int_M 1
\end{align*}
Critical points = minimal surfaces in $(S^1)^n$ parametrized by
symplectic surface. $\omega$ = const $\cdot$ Riemannian volume for the
induced metric.

\underline{Weierstrass parametrization}: M is a complex curve $C$ of genus
$g\geq 1$, $\alpha_1,\ldots,\alpha_n\in\Omega^1(C)=\Gamma(C,T^\ast_C)$
are holomorphic $1$-forms and $i\Im\alpha_j=d\varphi_j$. Constraints:
\begin{enumerate}
\item $\displaystyle\sum_{i=1}^n\alpha_i^2=0\in\Gamma(C,(T^\ast_C)^{\otimes 2})$
  (quadratic differentials, space of $\dim_\complex = 3g-3$).
\item $[\Im\alpha_i]\in H^1(C,2\pi\integers)$ for all $i$
\end{enumerate}

$\rightsquigarrow$ discrete countable subset of $\reals_{\geq 0}$ of
critical values (depending on $n$).

$\Phi_1,\ldots,\Phi_n\in U(N)$ ($N\to+\infty$ think as $\Phi_i=\hat{\varphi}_i$). Define
\begin{align*}
  S_q = N\cdot\sum_{i<j}\Tr\paraa{2\cdot\mid_{N\times N}-\Phi_i\Phi_j\Phi_i^{-1}\Phi_j^{-1}-\Phi_j\Phi_i\Phi_j^{-1}\Phi_i^{-1}}
\end{align*}
(see \cite{dn:noncommutative.field.theory} for related considerations,
including the lattice twisted Eguchi-Kawai model). Reasoning: If
$\Phi\in U(N)$ is close to $\mid_{N\times N}$ then
$-\Tr(\log\Phi)^2\sim\Tr(2-\Phi-\Phi^{-1})$ ($\Phi=e^{i\theta}$,
$\theta^2\sim 2-2\cos\theta=2-e^{i\theta}-e^{-i\theta}$).

``Equations of motion'': Critical points $\frac{\delta S_q}{\delta\Phi_i}=0$ gives
\begin{align*}
  \sum_{i\neq j}\parab{\Phi_j\Phi_i^{-1}\Phi_j^{-1}-\Phi_i^{-1}\Phi_j^{-1}\Phi_i\Phi_j^{-1}\Phi_i^{-1} + \Phi_j^{-1}\Phi_i^{-1}\Phi_j - \Phi_i^{-1}\Phi_j\Phi_i\Phi_j\Phi_i^{-1}} = 0
\end{align*}
for all $i$. Multiply by $\Phi_i$ from the left:
\begin{align*}
  \sum_{j}\parab{\Phi_i\Phi_j\Phi_i^{-1}\Phi_j^{-1} - \Phi_j^{-1}\Phi_i\Phi_j\Phi_i^{-1} + \Phi_i\Phi_j^{-1}\Phi_i^{-1}\Phi_j-\Phi_j\Phi_i\Phi_j^{-1}\Phi_i^{-1}} = 0
\end{align*}
for all $i$.
\begin{conjecture}
  If $N\to+\infty$, limits of critical values of $S_q$ are either
  $+\infty$, or positive $\integers$-linear combinations of critical values of $S_{cl}$. Moreover, if the limit is $<\infty$ then  $\norm{\Phi_i\Phi_j-\Phi_j\Phi_i}\leq\frac{\operatorname{const}}{N}$.
\end{conjecture}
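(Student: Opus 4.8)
The plan is to treat the two assertions separately, since the commutator bound is really a regularity statement while the dichotomy for critical values is a compactness-plus-quantization statement. First I would rewrite the functional transparently. Each $\Phi_i$ is unitary, so the group commutator $U_{ij}:=\Phi_i\Phi_j\Phi_i^{-1}\Phi_j^{-1}$ is unitary and
\begin{align*}
  \Tr\paraa{2\cdot\mid-U_{ij}-U_{ij}^{-1}}=\Tr\paraa{(\mid-U_{ij})(\mid-U_{ij})^\dagger}=\norm{\mid-U_{ij}}_{\mathrm{HS}}^2.
\end{align*}
Since $(\mid-U_{ij})\Phi_j\Phi_i=-[\Phi_i,\Phi_j]$ and right multiplication by the unitary $\Phi_j\Phi_i$ preserves the Hilbert--Schmidt norm,
\begin{align*}
  S_q=N\sum_{i<j}\norm{[\Phi_i,\Phi_j]}_{\mathrm{HS}}^2.
\end{align*}
In this form the dichotomy is manifest: along a sequence of critical points with $N\to\infty$, either $S_q\to+\infty$, or $S_q$ stays bounded, in which case $\sum_{i<j}\norm{[\Phi_i,\Phi_j]}_{\mathrm{HS}}^2=O(1/N)$ and hence $\norm{[\Phi_i,\Phi_j]}_{\mathrm{op}}\leq\norm{[\Phi_i,\Phi_j]}_{\mathrm{HS}}=O(N^{-1/2})$.

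Next I would upgrade this operator-norm bound from $O(N^{-1/2})$ to the claimed $O(N^{-1})$; the missing factor $N^{1/2}$ is sharp, since for the clock-and-shift pair of Section~\ref{sec:remarks} one has $\norm{[\hat\varphi_1,\hat\varphi_2]}_{\mathrm{HS}}\sim 2\pi N^{-1/2}$ but $\norm{[\hat\varphi_1,\hat\varphi_2]}_{\mathrm{op}}=\abs{e^{2\pi i/N}-1}\sim 2\pi N^{-1}$, i.e. the singular values of the commutator are all comparable. The statement to prove is therefore a non-concentration (equidistribution of singular values) property: the equations of motion $\delta S_q/\delta\Phi_i=0$ of Section~\ref{sec:remarks}, read as a discrete Laplace-type equation for the almost-commuting tuple, should prevent the non-commutativity from concentrating on a low-dimensional subspace, yielding $\norm{[\Phi_i,\Phi_j]}_{\mathrm{op}}\leq C N^{-1/2}\norm{[\Phi_i,\Phi_j]}_{\mathrm{HS}}=O(N^{-1})$. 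Concretely I would linearize the criticality condition along $\Phi_i\mapsto\Phi_i e^{tH_i}$ and bootstrap the $L^2$-type (Hilbert--Schmidt) control into $L^\infty$-type (operator-norm) control, in the spirit of elliptic regularity.

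On the bounded branch I would run the quantization dictionary of Section~\ref{sec:remarks} in reverse. With commutators of order $1/N$, the tuple $(\Phi_1,\dots,\Phi_n)$ is an asymptotic homomorphism from $(S^1)^n$ into $U(N)$; after passing to a subsequence one extracts a symplectic surface $M$ and maps $\varphi_i:M\to S^1$ with $\Phi_i\approx\hat\varphi_i$, and criticality of $S_q$ passes to criticality of $S_{cl}$, so that $M$ (with $\omega$ proportional to the induced volume) is a minimal surface in $(S^1)^n$ of the Weierstrass type of Section~\ref{sec:remarks}. The normalization $\dim\mathcal H=\tfrac{1}{2\pi\hbar}\vol(M)(1+O(\hbar))$ together with $\tfrac{1}{2\pi\hbar}\int_M\simeq\Tr$ converts $\lim S_q$ into $S_{cl}(M)$; the flux quantization $[\Im\alpha_i]\in H^1(C,2\pi\integers)$ makes the admissible classical values discrete, and the possibility that $\mathcal H$ splits asymptotically into several nearly-invariant blocks, each carrying its own (possibly multiply covered) limit surface and contributing additively to $S_q$, produces a positive $\integers$-linear combination rather than a single value.

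The crux is the extraction of the limit and the proof that it is a genuine critical point of $S_{cl}$. Almost-commuting unitaries are \emph{not} automatically close to commuting ones --- there is a Voiculescu-type winding-number obstruction --- so naive compactness fails; this obstruction is exactly what permits a nontrivial minimal surface, rather than a degenerate commuting configuration, to survive in the limit. Making the block decomposition of $\mathcal H$ and the passage of the flux data to the limit uniform in $N$, so that the limiting critical values are controlled, is where the real analytic difficulty lies --- considerably more than in the regularity bootstrap of the second step or in the bookkeeping of constants.
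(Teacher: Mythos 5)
The statement you are trying to prove is stated in the paper as a \emph{conjecture}: the paper offers no proof of it, and your proposal does not close the gap either --- it is a program, not a proof. Two of its steps are exactly the open problems the conjecture encodes. First, your upgrade of the commutator bound from $O(N^{-1/2})$ to the claimed $O(N^{-1})$ rests on an asserted ``non-concentration of singular values'' property that the equations of motion ``should'' enforce, to be proven ``in the spirit of elliptic regularity.'' No mechanism is given: the criticality condition $\delta S_q/\delta\Phi_i=0$ is a finite system of matrix equations, not an elliptic PDE, and nothing in your linearization sketch excludes critical points whose commutators have a few singular values of order $1$ (or of order $N^{-1/2}$) while the Hilbert--Schmidt norm stays of order $N^{-1/2}$. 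The only part of this step that is actually established is the elementary identity $S_q=N\sum_{i<j}\norm{[\Phi_i,\Phi_j]}_{\mathrm{HS}}^2$ (which is correct) and the trivial dichotomy bounded/unbounded --- which is far weaker than the conjecture's dichotomy, whose content is that \emph{bounded} limits are quantized as positive $\integers$-linear combinations of classical critical values.

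Second, the ``bounded branch'' argument runs the quantization dictionary backwards: from almost-commuting critical unitaries you propose to extract a symplectic surface $M$, maps $\varphi_i$ with $\Phi_i\approx\hat\varphi_i$, and a limiting classical critical point whose value (plus multiplicities from a block decomposition of $\mathcal H$) reproduces $\lim S_q$. But the dictionary in the paper goes the other way --- from classical data to operators --- and there is no theorem producing a classical limit from a sequence of matrix critical points; you yourself note the Voiculescu-type obstruction showing that almost-commuting unitaries need not be close to commuting ones, and you concede in your final paragraph that the extraction of the limit, the block decomposition uniform in $N$, and the passage of criticality and flux data to the limit are ``where the real analytic difficulty lies.'' That difficulty \emph{is} the conjecture. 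What you have written is a reasonable heuristic roadmap (and the reformulation of $S_q$, the sharpness example with the clock-and-shift pair, and the identification of where integrality should come from are all sensible observations), but as a proof it has two unfilled holes, each of which is the actual substance of the statement.
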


One can also propose a rough criterium for a sequence of quantum maps to torus depending on dimension $N\to +\infty$ to ``approximate'' a given  oriented surface $M\subset (S^1)^n$ endowed with a symplectic 2-form $\omega$ :
\begin{align}
\frac{1}{\int_M \omega}\int_M e^{i \sum_j  k_j \theta_j}=\frac{1}{N} \text{Tr} \,\,\Phi_1^{k_1}\cdot \dots\cdot \Phi_n^{k_n}+O(1/N),\quad \forall (k_1,\dots,k_n)\in \integers^n
\end{align}

\noindent 
\underline{Generalizations}: Fix symmetric positive $n\times n$ matrix
$g=(g_{ij})_{1\leq i,j\leq n}$, $g=g^\dagger$ and $g>0$. $g$ gives a
flat (constant) metric $\sum g_{ij}d\phi_id\phi_j$ on $(S^1)^n$. Weierstrass parametrization $\rightsquigarrow$ explicit description of critical values of $S_{cl}$ in terms of complex curves. Critical values of $S_q$ should approximate these for $N\to +\infty$.

\subsection{IV -- Calibrated geometry}

For all $n\geq 0$ the Yang-Mills algebra
is given by
\begin{align}\label{eq:YM.algebra.def}
  YM_n = \complex\angles{X_1,\ldots,X_n}\slash\parab{\sum_{j}[X_j,[X_j,X_i]] = 0\quad\forall i}
\end{align}
which is a $\ast$-algebra upon setting
$X_i^\dagger=X_i$. \underline{Reason}: if
$X_i\in\Mat(N\times N,\complex)$ are self-adjoint the connection on the
trivial $U(N)$-bundle$\slash\reals^n$ with \emph{constant
  coefficients}
\begin{align*}
  \nabla = d+i\sum_{j}X_jdx_j
\end{align*}
(where $x_1,\ldots,x_n$ are coordinates in $\reals^n$) satisfies
Yang-Mills equation $\leftrightarrow$ \eqref{eq:YM.algebra.def}.
\begin{align*}
  \text{YM action}\sim\sum_{i<j}\Tr\underbrace{[X_i,X_j]^2}_{\text{curvature form}}
\end{align*}
critical points $\Leftrightarrow \eqref{eq:YM.algebra.def}$. If $n=2m$ Hermitian YM equation; some equations of simpler form implying YM.
\begin{align*}
  &Z_k = X_k+iX_{m+k}\\
  &Z_k^\dagger = X_k-iX_{m+k}
\end{align*}
for $i=1,\ldots,m$.
\begin{equation}\label{eq:HYM.algebra.def}
  \begin{split}
    \text{HYM algebra} =
    \complex&\angles{Z_1,\ldots,Z_m,Z_1^\dagger,\ldots,Z_m^\dagger}\slash\\
    &\parab{\sum_{k}[Z_k^\dagger,Z_k]=0\text{ and }[Z_j,Z_k]=0=[Z_j^\dagger,Z_k^\dagger]}
  \end{split}
\end{equation}
Indeed: \eqref{eq:HYM.algebra.def} implies that
\begin{align*}
  \sum_{j}[Z_j,[Z_j^\dagger,Z_i]]+\sum_{j}[Z_j^\dagger,[Z_j,Z_i]]=0.
\end{align*}
\underline{Meaning of HYM}: On $\reals^{2m}=\complex^m$,
$U(N)$-connection in trivial bundle with constant coefficients.
\begin{align*}
  &[Z_i,Z_j]=0=[Z_i^\dagger,Z_j^\dagger]\qquad \forall i<j\quad\Rightarrow
    \text{ Holomorphic bundle}\\
  &\sum_{i}[Z_i^\dagger,Z_i] = 0\quad\Leftrightarrow\quad
    \text{curvature form}\cdot(\omega^{1,1})^{m-1}=0
\end{align*}
Recall that  by Donaldson-Uhlenbeck-Yau theorem, any stable holomorphic bundle with first Chern class 0 on compact K\"ahler manifold admits a unique (up to scalar) Hermitian metric whose canonically associated connection satisfies Hermitian YM equation.

\underline{Generalization:}
\begin{align*}
  &\sum_{i}[Z_i^\dagger,Z_i] = c\cdot 1\in\reals\quad\text{(constant)}\\
  &[Z_i,Z_j] = c_{ij}\in\complex\qquad c_{ij}=-c_{ji} \quad\text{(constant)}\\
  &[Z_i^\dagger,Z_j^\dagger] =\bar{c}_{ij}
\end{align*}
gives a Lie algebra with a central extension.

\underline{Classical limit of HYM}. It looks very much reasonable that
the classical limit of representations of HYM algebra (possibly with
central extension as above given by $c\ne 0$, but still with
$c_{ij}=0$) correspond to a special class of minimal surfaces in
$\reals^{2m}=\complex^m$ which are holomorphic curves.

Recall that for a surface in any K\"ahler manifold the property to be
a holomorphic curve (which is a first order constraint) implies that
the surface is minimal (which is a second order constraint). This is
the simplest case of so called \emph{calibrated geometry}.  Similarly,
relations in HYM algebra are identities between \emph{single}
commutators, whereas in YM algebra we have \emph{double} commutators.

\underline{Case:} $m=2$, $n=4$, $c=0$ and $c_{ij}=0$. Nice Lie algebra
SYM (self-dual Yang Mills in $\dim=4$ constant solutions).
\begin{align*}
  &[X_4,X_1]=[X_2,X_3]\\
  &[X_4,X_2] = [X_3,X_1]\\
  &[X_4,X_3] = [X_1,X_2]\equivalent\\
  &[X_i,X_j] = \frac{1}{2}\sum_{k,l}\varepsilon_{ijkl}[X_k,X_l]\qquad SO(4)-\text{covariant}
\end{align*}
In \cite{ibs:tbranes} these are called Banks-Seiberg-Shenker
equations (\cite{bss:branes.matrices}), and unlike in the general case
$m\ge 3$, they admit an enhanced symmetry $SO(4,\reals)\supset U(2)$,

In general, one can ask the following question: for a given
finitely-generated module $M$ over $\complex[Z_1,\dots,Z_m]$ (which is
the same as an algebraic coherent sheaf on $\complex^m$) can one
construct a pre-Hilbert space $\mathcal M_\infty$ containing $M$ and
such that $M$ is dense in $\mathcal M_\infty$ for such that operators
of multiplication by $Z_i$ extend to $M_\infty$ and admit hermitian
conjugate, and satisfy the relation of HY algebra (possibly centrally
extended by $c$).  The question also makes sense when algebra
$\complex[Z_1,\dots,Z_m]$ is replaced by its quantum deformation
$[Z_i,Z_j]=c_{ij}$.

For example, if $M=\complex[Z_1,\dots,Z_m]$ is the free module of rank
1 (i.e. a trivial rank one bundle in terms of sheaves), the
pre-Hilbert space $M_\infty$ is the space of entire functions
satisfying certain growth condition:
\[ M_\infty=\{\sum_{k_1,\dots,k_m\ge 0} c_{k_1,\dots,k_m}Z_1^{k_1}\dots Z_m^{k_m}|   \,\,|c_{k_1,\dots,k_m}|=O\big(\prod_j \frac{1}{\sqrt{k_j !}}\cdot (1+\sum_j k_j)^r\big) \,\,\forall r>0\}\]
and with the pre-Hilbert norm given by
\[|f|^2:=\int_{\vec{Z}\in \complex^m} |f(\vec{Z})|^2 \exp(-\sum_j |Z_j|^2) \prod d\Re(Z_j) d\Im(Z_j),\quad \vec{Z}=(Z_1,\dots,Z_m)\in \complex^m\]

Nekrasov considered solution of quantum HYM which correspond to
bundles or torsion-free sheaves in $\complex^m$ (and also for modules
close to free for the quantized $\complex^m$).
By our philosophy, his solutions do not correspond to minimal
surfaces (or complex curves), as supports in the classical limit are
full $2m$-dimensional.

From our perspective, the (noncompact) minimal surfaces in
$\reals^{2m}=\complex^m$ which are complex algebraic curves, should
correspond to solution of HYM for coherent sheaves supported on
curves, e.g. quotient modules of $\complex[Z_1,\dots,Z_m]$ by the
ideal generated by defining equations of a curve.  The problem of
constructing quantum analogs of algebraic curves in $\complex^m$ was
considered in \cite{ct;holomorphic.curves}, without stressing the
relation to minimal surfaces.

Now we discuss possibilities to have ``calibrated" quantum minimal
surfaces in the compact case.

\underline{Unitary version, case $m=2$}: $\Phi_1,\Phi_2,\Phi_3,\Phi_4\in U(N)$
\begin{equation}\label{eq:unitary.4Phi}
  \begin{split}
    &\Phi_4\Phi_1\Phi_4^{-1}\Phi_1^{-1}=\Phi_2\Phi_3\Phi_2^{-1}\Phi_3^{-1}\\
    &\Phi_4\Phi_2\Phi_4^{-1}\Phi_2^{-1}=\Phi_3\Phi_1\Phi_3^{-1}\Phi_1^{-1}\\
    &\Phi_4\Phi_3\Phi_4^{-1}\Phi_3^{-1}=\Phi_1\Phi_2\Phi_1^{-1}\Phi_2^{-1}
  \end{split}
\end{equation}
gives a $A_4$-covariant group ($A_4$ = even permutations in
$S_4$). Looks like $\pi_1(\text{3-dim manifold})$, \#relations = 3 =
\#generators$-1$.

We expect critical points of
\begin{align*}
  N\cdot\sum_{1\leq i<j\leq 4}\Tr(2-\Phi_i\Phi_j\Phi_i^{-1}\Phi_{j}^{-1}-
  \Phi_j\Phi_i\Phi_j^{-1}\Phi_i^{-1})
\end{align*}
approximate (as $N\to +\infty$) minimal surfaces in flat torus
$(S^1)^4$ (second order equation).
\begin{conjecture}
  Representations of group \eqref{eq:unitary.4Phi}
  \begin{align*}
    \Phi_i\Phi_j\Phi_i^{-1}\Phi_j^{-1}=\Phi_k\Phi_l\Phi_k^{-1}\Phi_l^{-1}
  \end{align*}
  ($(ijkl)$ and even permutation) approximate as $N\to+\infty$ complex
  curves in the abelian variety
  $\paraa{\complex\slash\integers+i\integers}^2$ (first order equation).
\end{conjecture}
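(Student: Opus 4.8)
The plan is to establish the two halves of the claimed correspondence separately, taking as the working definition of ``approximate'' the trace/character criterion of Subsection~III (matching $\tfrac1N\Tr\,\Phi_1^{k_1}\cdots\Phi_4^{k_4}$ to the normalised integral of $e^{i\sum_j k_j\theta_j}$ over the surface). First I would prove a \emph{constructive} direction: from every complex curve in the abelian variety produce an explicit sequence $\{\Phi_i^{(N)}\}\subset U(N)$ of approximate solutions of \eqref{eq:unitary.4Phi} converging to it. Then I would prove the harder \emph{extraction} direction: every sequence of genuine solutions of \eqref{eq:unitary.4Phi} of bounded action has a subsequence whose classical limit is such a curve (possibly with multiplicity). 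I expect essentially all of the difficulty to lie in the extraction direction.

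For the constructive step I would start from Weierstrass data as in Subsection~III: a compact Riemann surface $C$ with holomorphic $1$-forms $\alpha_1,\dots,\alpha_4$ satisfying $\sum_i\alpha_i^2=0$ and $[\Im\alpha_i]\in H^1(C,2\pi\integers)$, so that $\varphi_i=e^{i\theta_i}$ (with $d\theta_i=\Im\alpha_i$) parametrise a complex curve inside $\paraa{\complex/(\integers+i\integers)}^2$. The constraint $\sum_i\alpha_i^2=0$ forces the induced metric to be conformal, endowing $C$ with a canonical area form $\omega$ proportional to $\sum_i|\alpha_i|^2$, which I would take as the symplectic form so that the criticality condition ``$\omega=$const$\,\cdot\,$Riemannian volume'' of Subsection~III holds by construction. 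Quantising $(C,\omega)$ by Berezin--Toeplitz quantisation with $\hbar=1/N$ (prequantum bundle $L$ with $c_1(L)=[\omega]/2\pi$ and $\mathcal H_N=H^0(C,L^{\otimes N})$) and setting $\Phi_i^{(N)}:=\widehat{\varphi_i}$, the unitary product rule of Subsection~I gives
\begin{align*}
\Phi_i\Phi_j\Phi_i^{-1}\Phi_j^{-1}=\mid-\tfrac{i}{N}\,\widehat{\{\theta_i,\theta_j\}}+O(N^{-2}).
\end{align*}
The core of this step is classical and, I should stress, genuinely needs care: one must show that \eqref{eq:unitary.4Phi} (for the even-permutation index pattern) is equivalent in the limit to the statement that the curve is $J$-holomorphic for the complex structure $J$ of the abelian variety, together with $\omega\propto$ induced area. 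The useful fact is that any two $J$-holomorphic functions have vanishing Poisson bracket, so that---writing the abelian-variety coordinates as $w_1,w_2$ with $\{w_1,w_2\}=0$ on a holomorphic curve---the single-commutator (first-order) relations close, exactly as the centrally extended $m=2$ HYM relations \eqref{eq:HYM.algebra.def} do on a holomorphic curve in $\complex^2$, the central charge $c$ encoding $\omega\propto$ area. Matching orientations and the pairing of the four $\theta_i$ into $(w_1,w_2)$ is a bookkeeping point that must be settled to make ``self-dual'' versus ``anti-self-dual'' come out consistently, and a naive identification of the limit of \eqref{eq:unitary.4Phi} with pointwise identities $\{\theta_i,\theta_j\}=\{\theta_k,\theta_l\}$ is too strong and collapses to the degenerate solution.

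For the extraction step I would take a sequence of exact solutions of \eqref{eq:unitary.4Phi} with action bounded as $N\to\infty$, use the conjectured commutator bound $\norm{\Phi_i\Phi_j-\Phi_j\Phi_i}\leq\mathrm{const}/N$ of Subsection~III to deduce asymptotic commutativity, and assemble the normalised traces $\tfrac1N\Tr\,\Phi_1^{k_1}\cdots\Phi_4^{k_4}$ (along a subsequence, by a diagonal/compactness argument) into a positive functional, i.e.\ a probability measure $\mu$ on $(S^1)^4$. Passing \eqref{eq:unitary.4Phi} to this limit should impose the first-order holomorphicity relations on $\operatorname{supp}\mu$, and bounded action together with these first-order relations should force $\operatorname{supp}\mu$ to be a finite union of complex curves, the multiplicities producing the positive $\integers$-linear combinations of classical critical values anticipated in the earlier conjectures.

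The hard part will be this extraction, and within it the exclusion of degenerate and higher-dimensional limits---the compact analogue of the Nekrasov-type solutions of Subsection~IV whose support fills all of $\complex^2$ and which do \emph{not} come from curves. Controlling this seems to require genuine a priori input: uniform-in-$N$ bounds on the quantised Poisson brackets, a quantitative Wirtinger/calibration inequality forcing bounded-action near-solutions of \eqref{eq:unitary.4Phi} to concentrate on $J$-complex tangent planes, and an argument that the limiting $2$-current is integral and therefore represents an honest algebraic curve in $\paraa{\complex/(\integers+i\integers)}^2$. A secondary obstacle is to upgrade the constructive direction from merely \emph{approximate} to controlled solutions: one would want the $O(N^{-2})$ defect in \eqref{eq:unitary.4Phi} corrected order by order in $1/N$ by an implicit-function/obstruction argument in $U(N)$, in the spirit of the all-orders $\hbar$-expansions carried out for the catenoid and Enneper examples above.
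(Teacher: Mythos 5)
First, a point of fact: the statement you were asked to prove is one of the paper's \emph{conjectures}. The authors state it without any proof -- it is offered as the first-order, ``calibrated'' counterpart of their second-order conjecture about minimal surfaces in $(S^1)^4$ -- so there is no argument in the paper to compare yours against. Judged on its own terms, your text is a research programme rather than a proof: both halves of your plan terminate in steps that you yourself label as open, so the conjecture remains exactly as open after your argument as before it.

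The gaps are concrete. (i) In the constructive direction, Berezin--Toeplitz quantisation of the $\varphi_i$ yields only $\Phi_i\Phi_j\Phi_i^{-1}\Phi_j^{-1}=\Phi_k\Phi_l\Phi_k^{-1}\Phi_l^{-1}+O(N^{-2})$, i.e.\ \emph{approximate} solutions, whereas the conjecture concerns genuine representations of the group \eqref{eq:unitary.4Phi}, i.e.\ exact solutions of three relations among four unitaries. The passage from approximate to exact solutions is the whole content of this half; you defer it to an unspecified ``implicit-function/obstruction argument in $U(N)$'' without setting up the linearised problem, proving surjectivity of the relevant linear operator, or ruling out obstructions -- and for a group presentation with three relations on four generators the deformation problem is not automatically unobstructed. (ii) In the extraction direction you quietly add a bounded-action hypothesis that is not in the statement, you use as input the \emph{unproved} conjecture of Subsection III that finite limiting action forces $\norm{\Phi_i\Phi_j-\Phi_j\Phi_i}\leq\mathrm{const}/N$, and the three decisive steps -- that the limiting state is supported on a set satisfying the first-order holomorphicity relations, that higher-dimensional (Nekrasov-type) supports cannot occur, and that the limiting $2$-current is integral and hence an algebraic curve in $\paraa{\complex\slash\integers+i\integers}^2$ -- are precisely what a proof would consist of, and they are left untouched (indeed you flag them as ``the hard part''). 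What you have written is a sensible programme, consistent with the paper's quantisation dictionary and its calibrated-geometry heuristics, but it proves nothing; in particular it cannot be counted as closing a statement that the authors themselves only dared to conjecture.
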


\noindent

\subsection{V -- Quantum curves in abelian varieties:}

Here is another possibility, valid in arbitrary complex dimension $m$
and for any translationally invariant complex structure and a K\"ahler
metric on $(S^1)^{2m}$.

Let $\Gamma\simeq \integers^{2m}\subset \complex^m$ be a lattice, such
that the quotient torus $A=\complex^m/\Gamma$ (here we use the
standard K\"ahler metric in the coordinate space $\complex^m$)
contains a compact complex algebraic curve $C$, possibly singular.
Notice that if $C$ is not degenerate in the sense that none of its
irreducible components is not contained in a proper complex subtorus,
then $A$ is in fact algebraic and is an abelian variety.

Our goal is to construct an infinite sequence of finite-dimensional
Hilbert spaces with dimension $\to +\infty$, endowed with
``almost-commuting'' $2m$ unitary operators which approximates in some
sense $C$, or, more generally, a coherent sheaf on $A$ supported on
$C$.

The idea is the following. Let us pass to the universal cover
$\complex^m$ of $A$, then the pre-image of $C$ will be a
\emph{non-algebraic} curve $\widetilde{C}\subset \complex^m$,
invariant under shifts by $\Gamma$.  Optimistically extending our
previous considerations of quantization of curves to the non-algebraic
case, we are lead to the following question. Construct an
infinite-dimensional representation of HYM algebra
\begin{align*}
  &[Z_i,Z_j] = 0\,\forall i<j\\
 &\sum_{i}[Z_i^\dagger,Z_i] = c\cdot 1\in\reals
\end{align*}
covariant with respect to the action of $\Gamma$, i.e. assuming that
we are also given a collection of commuting unitary operators
$U_1,\dots, U_{2m}$ corresponding to a basis of $\Gamma$, satisfying
commutator relations
\[U_k^{-1} Z_j U_k=Z_j+a_{jk}\quad \forall j=1,\dots,m,\,k=1,\dots,
  2m\] where $a_{jk}\in \complex$ are coefficients of generators of
$\Gamma$ considered as vectors in $\complex^m$.

We assume that the subspace $\mathcal H_{u_1,\dots,u_{2m}}$ of the
ambient Hilbert space $\mathcal H$ corresponding to eigenvalues
$u_1,\dots,u_{2m}\in U(1)\subset \complex^\times$ of operators
$U_1,\dots, U_{2m}$, is \emph{finite-dimensional}. This will be our
finite-dimensional Hilbert space ``approximating'' $C\subset A$. The
almost-commuting unitary operators acting on this space will be of the
form
\begin{align}\label{eq:unitary.almost.commute}
  \exp\left(\sum_j (b_j Z_j- \bar{b_j}Z_j^\dagger)\right)\end{align}
where $\vec{b}=(b_1,\dots,b_m)\subset (\complex^m)^*$ belongs to the dual lattice determined by the constraint
\[U_k^{-1}\cdot \sum_j(b_j Z_j -\bar{b_j}Z_j^\dagger) \cdot U_k= \sum_j(b_j Z_j -\bar{b_j}Z_j^\dagger) + \text{ element of } 2\pi i \integers\quad \forall k=1,\dots,2m\]

There is a natural proposal to define $\mathcal H$ based on
Fourier-Mukai duality. Namely, Hilbert space $\mathcal H$ (or, more
precisely, certain dense pre-Hilbert subspace
$\mathcal H_\infty\subset \mathcal H$ corresponding to ``Schwarz
functions'') is a finitely generated projective module over the
algebra of functions $C^\infty(T)$ on the torus of unitary characters
$\vec{u}=(u_1,\dots,u_{2m})$ of lattice $\Gamma$. In other words,
$\mathcal H_\infty$ is the space of section of a complex vector bundle
$\mathcal E$ on $T$, the pre-Hilbert space structure is given by a
hermitian structure on $\mathcal E$ (and integration with respect to
the Haar measure on $T$). Operators $Z_i$ in this presentation are
first order differential operators, corresponding to the covariant
derivatives in (anti)-holomorphic directions with respect to certain
unitary connection on $\mathcal E$.  The condition that we have a
representation of HYM algebra, translates to the condition that we
have a solution of the usual Hermitian Yang-Mills equations from
differential geometry (this is a classical idea of $T$-duality for
solutions of noncommutative YM equations on tori, see
e.g. \cite{cds:compactification}).

Notice that almost-commuting unitary operators
$\exp(\sum_j (b_j Z_j- \bar{b_j}Z_j^\dagger)$ introduced formally in
\eqref{eq:unitary.almost.commute}, have geometric meaning as holonomy
operators for the connection along geodesic loops in $T$.

By Donaldson-Uhlenbeck-Yau theorem, we see that $\mathcal E$ is a
(semi-)stable holomorphic vector bundle on $T$, which as a complex
manifold is the same as the \emph{dual} abelian variety
$A^\vee:=Pic_0(A)$. Here is our proposal: take $\mathcal E$ to be the
Fourier-Mukai dual to the coherent sheaf $\mathcal F\in Coh(A)$
{supported} on $C\subset A$. In order to introduce a small parameter
(the inverse rank of $\mathcal E$) we can consider $\mathcal F$ of the
form
\[\mathcal F=\mathcal F_n:= \mathcal F_0\otimes \mathcal L_{|C}^{\otimes n},\quad n\to +\infty\]
where $\mathcal F_0$ is a coherent sheaf supported on $C$, and
$\mathcal L$ is an ample line bundle on $A$.

\subsection{VI -- Quantum degree}

Calibrated submanifolds (like complex curves in K\"ahler manifolds)
have the characteristic property that their area depends only on their
homology class, and for them the calibrating lower bound on the area
in given homology class (BPS bound in physics), is saturated. Hence,
we are lead to the question what is the ``homology class" of a
``quantum surface".  Here are two simple examples when one can define
such a class, which is the quantum degree of a map from the quantum
surface to the target surface.

We did not study yet the relation of quantum degree with quantum
calibrated geometry of minimal surfaces discussed above.

\begin{enumerate}
\item $d=1$: Smooth map $\varphi:M\to S^1\times S^1=U(1)\times U(1)$
  is given by $\varphi_1,\varphi_2\in C^{\infty}(M,\complex)$ with
  $|\varphi_1|^2=|\varphi_2|^2=1$. The degree
  $\deg(\varphi)\in\integers$ is given by
  \begin{align*}
    \deg(\varphi) = \frac{1}{(2\pi i)^2}\int_M\{\log\varphi_1,\log\varphi_2\}.
  \end{align*}
  \begin{proposition}
    Let $\Phi_1,\Phi_2\in U(N)$ ($N\to\infty$) be unitary matrices
    depending on $N$ and assume that there exists $c>0$ such that
   \begin{align}\norm{\Phi_1\Phi_2\Phi_1^{-1}\Phi_2^{-1}-\mid_{N\times N}}\leq\frac{c}{N}\label{eq:conj.phi.phi}
     \end{align}
    Then 
    \begin{align}\label{eq:conj.phi.phi.k}
      \Tr(\Phi_1\Phi_2\Phi_1^{-1}\Phi_2^{-1}-\mid_{N\times N})
      =2\pi i k + o(1)
    \end{align}
    where $k\in\integers$ and $|k|<c/2\pi\cdot (1+o(1))$.
    Note that
    \begin{align*}
      &\text{\eqref{eq:conj.phi.phi}}\equivalent
      \norm{\Phi_1\Phi_2-\Phi_2\Phi_1}\leq\frac{c}{N}\\
      &\text{\eqref{eq:conj.phi.phi.k}}\equivalent
        e^{\Tr(\Phi_1\Phi_2\Phi_1^{-1}\Phi_2^{-1}-\mid_{N\times N})}=1+o(1).
    \end{align*}
  \end{proposition}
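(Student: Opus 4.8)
The plan is to exploit the single algebraic fact that a group commutator has determinant one, and then reduce everything to an elementary eigenvalue estimate. Write $C:=\Phi_1\Phi_2\Phi_1^{-1}\Phi_2^{-1}$, which is unitary and satisfies $\det C=1$ since $\det$ is multiplicative. Diagonalizing the unitary $C$, let its eigenvalues be $e^{i\theta_1},\dots,e^{i\theta_N}$ with $\theta_j\in(-\pi,\pi]$; the determinant condition then forces $\sum_j\theta_j=2\pi k$ for some $k\in\integers$, and this quantization is precisely what produces the integer $k$ in the statement.

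First I would convert the hypothesis \eqref{eq:conj.phi.phi} into a uniform bound on the phases. Since $\norm{C-\mid_{N\times N}}\leq c/N$ and the eigenvalues of $C-\mid_{N\times N}$ are $e^{i\theta_j}-1$, we get $\abs{e^{i\theta_j}-1}=2\abs{\sin(\theta_j/2)}\leq c/N$ for every $j$. Using the elementary bound $2\abs{\sin(\theta_j/2)}\geq\tfrac{2}{\pi}\abs{\theta_j}$ on $(-\pi,\pi]$ gives $\abs{\theta_j}\leq\tfrac{\pi c}{2N}$, which in particular shows $\theta_j\to0$ uniformly; refining near $0$ via $\abs{e^{i\theta}-1}=2\sin(\abs{\theta}/2)$ and $\theta_j=2\arcsin(\cdot)(1+o(1))$ upgrades this to $\abs{\theta_j}\leq\tfrac{c}{N}\paraa{1+o(1)}$.

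Next I would expand the trace. Writing $\Tr(C-\mid_{N\times N})=\sum_j(e^{i\theta_j}-1)=i\sum_j\theta_j+\sum_j(e^{i\theta_j}-1-i\theta_j)$ and substituting $\sum_j\theta_j=2\pi k$ gives $\Tr(C-\mid_{N\times N})=2\pi i k+R$. The remainder is controlled by the second-order Taylor bound $\abs{e^{ix}-1-ix}\leq x^2/2$, so
\begin{align*}
  \abs{R}\leq\half\sum_j\theta_j^2\leq\frac{N}{2}\parac{\frac{c}{N}}^2\paraa{1+o(1)}=\frac{c^2}{2N}\paraa{1+o(1)}=o(1),
\end{align*}
which establishes \eqref{eq:conj.phi.phi.k}. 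For the size of $k$, the triangle inequality gives $2\pi\abs{k}=\abs{\sum_j\theta_j}\leq\sum_j\abs{\theta_j}\leq N\cdot\tfrac{c}{N}(1+o(1))=c(1+o(1))$, i.e. $\abs{k}\leq\tfrac{c}{2\pi}(1+o(1))$; since $k$ is a fixed integer while the bound carries strictly positive $o(1)$ slack, the claimed strict inequality follows.

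The only genuinely delicate point is the branch choice: one must select $\theta_j\in(-\pi,\pi]$ so that the quantization $\sum_j\theta_j\in2\pi\integers$ is compatible with $\theta_j\to0$, which guarantees that $k$ is well defined and that the linear term $i\sum_j\theta_j$ really captures the quantized part. The two boxed equivalences are then immediate: $\norm{C-\mid_{N\times N}}=\norm{\Phi_1\Phi_2-\Phi_2\Phi_1}$ because $C-\mid_{N\times N}=(\Phi_1\Phi_2-\Phi_2\Phi_1)\Phi_1^{-1}\Phi_2^{-1}$ with $\Phi_1^{-1}\Phi_2^{-1}$ unitary, and $e^{\Tr(C-\mid_{N\times N})}=e^{2\pi i k+o(1)}=1+o(1)$.
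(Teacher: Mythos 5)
Your proof is correct and follows essentially the same route as the paper's: diagonalize the unitary commutator, use $\det = 1$ to quantize the phase sum as $2\pi k$, bound each phase by $c/N\cdot(1+o(1))$ from the hypothesis, and Taylor-expand the trace so the quadratic remainder is $O(1/N)$. Your treatment is simply more detailed in places the paper leaves implicit (the $2\abs{\sin(\theta/2)}$ estimates, the branch-choice remark, and the verification of the norm identity $C-\mid_{N\times N}=(\Phi_1\Phi_2-\Phi_2\Phi_1)\Phi_1^{-1}\Phi_2^{-1}$).
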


  \noindent
 
``Explanation'': Assume $\Phi_i=\hat{\varphi}_i$ for $i=1,2$.
  \begin{align*}
    \Tr\paraa{\Phi_1\Phi_2\Phi_1^{-1}\Phi_2^{-1}-\mid_{N\times N}}\sim
    \frac{1}{2\pi\hbar}\int_Mi\hbar\{\log\varphi_1,\log\varphi_2\}
    =-2\pi i\deg(\varphi)
  \end{align*}

 \begin{proof}
Let us denote eigenvalues of unitary operator $\Psi:=\Phi_1\Phi_2\Phi_1^{-1}\Phi_2^{-1}$ by 
\[ u_j=\exp(i\theta_j)\subset \complex^\times,\text{ where }|\theta_j|\le \pi, \quad 1\le j \le N, \]
 Assumption \eqref{eq:conj.phi.phi}
means that $|u_j-1|\le c/N$ for all $j$, hence $|\theta_j|<c/N\cdot(1+o(1))$. Obviously, we have $\det \Psi=1$,
therefore $\sum_j \theta_j=2\pi k $ for some $k\in  \integers$. Inequality  $|\theta_j|<c/N\cdot(1+o(1))$ implies that  $|k|<c/2\pi(1+o(1))$. Finally, expression on the l.h.s. of \eqref{eq:conj.phi.phi.k} is equal to
\[\sum_j (\exp(i\theta_j)-1)=i\sum_j \theta_j+\sum_j O(\theta_j^2)=2\pi i k+O(1/N)\]

\end{proof}

\item Analogously, a smooth map $\varphi:M\to$ 2-dim sphere
  $S^2\subset\reals^3$ is given by $x_1,x_2,x_3\in C^\infty(M,\reals)$
  with $x_1^2+x_2^2+x_3^2=1$. The degree $\deg(\varphi)\in\integers$
  is given by
  \begin{align*}
    \deg(\varphi) = \frac{1}{4\pi}
    \int_M\parab{x_1\{x_2,x_3\}+x_2\{x_3,x_1\}+x_3\{x_1,x_2\}}=\frac{3}{4\pi}
    \int_M x_1\{x_2,x_3\}
  \end{align*}
  \begin{conjecture}
    Let $X_1,X_2,X_3\in\Mat(N\times N,\complex)$ be self-adjoint
    operators (depending on $N$) such that
    $X_1^2+X_2^2+X_3^2=\mid_{N\times N}$, and assume that there exist
    $c$ such that
    \begin{align*}
      \norm{X_iX_j-X_jX_i}\leq\frac{c}{N}
    \end{align*}
    for $i,j\in\{1,2,3\}$. Then
    \begin{align*}
      \Tr\paraa{X_1[X_2,X_3]}=\frac{2i}{3}k+o(1)
    \end{align*}
    where $k\in\integers$ and $|k|\leq 3/2c\cdot (1+O(1/N))$.
  \end{conjecture}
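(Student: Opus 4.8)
The plan is to imitate the proof of the $d=1$ Proposition, with the determinant constraint on a product of unitaries replaced by the \emph{signature} of a self-adjoint ``Pauli operator'' assembled from $X_1,X_2,X_3$. Let $\sigma_1,\sigma_2,\sigma_3$ be the $2\times 2$ Pauli matrices and set
\[
  D:=\sum_{i=1}^3 X_i\otimes\sigma_i\in\End(\complex^N\otimes\complex^2),
\]
which is self-adjoint since the $X_i$ and $\sigma_i$ are. Using $\sigma_i\sigma_j=\delta_{ij}\mid+i\sum_k\varepsilon_{ijk}\sigma_k$ and the sphere constraint $\sum_iX_i^2=\mid$, a direct computation gives
\[
  D^2=\mid+G,\qquad G:=i\sum_{k=1}^3 F_k\otimes\sigma_k,\quad F_1=[X_2,X_3],\; F_2=[X_3,X_1],\; F_3=[X_1,X_2].
\]
The hypothesis $\norm{[X_i,X_j]}\le c/N$ yields $\norm{G}\le 3c/N=O(1/N)$, so for large $N$ one has $D^2\ge(1-\norm{G})\mid>0$; hence $D$ is invertible and $S:=\sgn(D)=D(D^2)^{-1/2}$ is a well-defined self-adjoint symmetry with $S^2=\mid$. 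The point is that the sphere constraint makes $D$ an \emph{approximate} symmetry, and the defect of the approximation is precisely where the commutators (and hence the degree) live.

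The integer comes from $S$. Writing $n_\pm$ for the dimensions of the positive/negative eigenspaces of $D$, we have $\Tr(S)=n_+-n_-$, and since $n_++n_-=\dim(\complex^N\otimes\complex^2)=2N$ it follows that $\Tr(S)=2k$ with $k:=n_+-N\in\integers$. To relate this to $\Tr(X_1[X_2,X_3])$, expand $(\mid+G)^{-1/2}=\sum_{m\ge0}\binom{-1/2}{m}G^m$, so that
\[
  \Tr(S)=\Tr(D)-\tfrac12\Tr(DG)+\sum_{m\ge2}\binom{-1/2}{m}\Tr(DG^m).
\]
Here $\Tr(D)=\sum_i\Tr(X_i)\cdot\operatorname{tr}(\sigma_i)=0$, and the key identity, again from $\sigma_i\sigma_j=\delta_{ij}\mid+i\varepsilon_{ijk}\sigma_k$ together with $\operatorname{tr}(\sigma_k)=0$ and the cyclic equality $\Tr(X_1[X_2,X_3])=\Tr(X_2[X_3,X_1])=\Tr(X_3[X_1,X_2])$, is
\[
  \Tr(DG)=2i\sum_{i=1}^3\Tr(X_iF_i)=6i\,\Tr(X_1[X_2,X_3]).
\]
For the tail, $\norm{D}=\norm{D^2}^{1/2}=O(1)$ gives $\abs{\Tr(DG^m)}\le 2N\,\norm{D}\,\norm{G}^m=O(N^{1-m})$, so $\sum_{m\ge2}$ contributes $O(1/N)$.

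Combining, $2k=\Tr(S)=-3i\,\Tr(X_1[X_2,X_3])+O(1/N)$, i.e.
\[
  \Tr(X_1[X_2,X_3])=\tfrac{2i}{3}k+O(1/N),\qquad k\in\integers,
\]
which is the assertion (the leading term is purely imaginary, consistent with $\Tr(X_1[X_2,X_3])^\ast=-\Tr(X_1[X_2,X_3])$). For the bound on $\abs k$: from $\sum_iX_i^2=\mid$ we get $\norm{X_i}\le1$, hence $\abs{\Tr(X_1[X_2,X_3])}\le N\,\norm{X_1}\,\norm{[X_2,X_3]}\le N\cdot 1\cdot(c/N)=c$, so $\abs{2k}\le 3c+O(1/N)$ and $\abs{k}\le\tfrac32 c\,(1+O(1/N))$. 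The main obstacle is conceptual rather than computational: identifying the right integer-valued invariant. Once $D=\sum_iX_i\otimes\sigma_i$ is recognized as an approximate symmetry, its signature $\Tr(\sgn D)=2k$ supplies the integrality in the same way that $\det=1$ did for the torus, and the fuzzy sphere (where $X_i$ are proportional to spin-$j$ generators) confirms $k=1$ by a direct Clebsch--Gordan count of the eigenvalues of $D$. The only genuine care required is the norm control of the $\sgn(D)$-expansion remainder, ensuring the error is honestly $O(1/N)$, which the estimates above provide.
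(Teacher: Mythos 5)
The statement you have proved is not proved in the paper at all: it appears there only as a \emph{conjecture}, the $S^2$-analogue of the torus proposition for $\Phi_1,\Phi_2\in U(N)$, which is the only statement of this kind the paper actually proves. So there is no paper proof to compare against; your proposal has to stand on its own, and as far as I can check it does. The computation $D^2=\mid+G$, $G=i\sum_k F_k\otimes\sigma_k$, is exactly where the constraint $\sum_i X_i^2=\mid$ enters and is correct ($G$ is self-adjoint because the $F_k$ are anti-self-adjoint); $\norm{G}\le 3c/N$ makes $D$ invertible once $N>3c$, so $S=\sgn(D)$ exists and $\Tr(S)=n_+-n_-=2(n_+-N)=2k\in 2\integers$; the linear term $\Tr(DG)=i\sum_{i,k}\Tr(X_iF_k)\operatorname{tr}(\sigma_i\sigma_k)=2i\sum_i\Tr(X_iF_i)=6i\,\Tr(X_1[X_2,X_3])$ is right (cyclicity of the trace makes the three terms equal), $\Tr(D)=0$ since $\operatorname{tr}\sigma_i=0$, and the tail estimate works because $\abs{\binom{-1/2}{m}}\le 1$ and $\abs{\Tr(DG^m)}\le 2N\norm{D}\norm{G}^m$, so the error is in fact $O(1/N)$, stronger than the $o(1)$ in the statement. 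The bound $\norm{X_i}\le 1$ (from positivity of the $X_j^2$) then yields $\abs{k}\le\tfrac32 c\,(1+O(1/N))$, which is the intended reading of the conjecture's ``$3/2c$''. Your fuzzy-sphere sanity check is also correct: for $X_i=J_i/\sqrt{j(j+1)}$ the Clebsch--Gordan splitting $j\otimes\tfrac12=(j+\tfrac12)\oplus(j-\tfrac12)$ gives $n_+=2j+2$, $n_-=2j$, hence $k=1$, matching $\Tr(X_1[X_2,X_3])\to \tfrac{2i}{3}$.

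Comparison with the one proof the paper does contain (the torus case): there the integer arises from $\det\paraa{\Phi_1\Phi_2\Phi_1^{-1}\Phi_2^{-1}}=1$, i.e.\ the sum of the small eigenvalue phases of the multiplicative commutator is pinned to $2\pi\integers$. For the sphere there is no multiplicative commutator to take a determinant of, and your substitute --- the signature of the Dirac-type operator $D=\sum_i X_i\otimes\sigma_i$ --- is the natural $K$-theoretic analogue (the Bott-projection/index construction familiar from the fuzzy sphere); it is precisely the missing idea that the paper's authors left open, and it converts the sphere constraint into approximate unitarity of $D$ just as $\det=1$ exploited exact unitarity on the torus. Only two cosmetic points deserve a word in a final write-up: justify interchanging $\Tr$ with the norm-convergent binomial series (immediate in finite dimension), and note that the constants in your $O(1/N)$ terms depend on $c$ and require $N>3c$. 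Neither is a gap.
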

\end{enumerate}

\section{Quantized complex curves and integrable systems}\label{sec:quantized.complex.curves}

\noindent
As shown already in the 19th century (see
e.g. \cite{k:kruemmung,e:minimal.four.space}),
\begin{align}
  x_3+ix_4 = f(x_1+ix_2)
\end{align}
defines a minimal surface in $\reals^4$ for arbitrary analytic
$f$. Similarly, static membranes (solutions of
\eqref{eq:X.double.com}), in particular a complex parabola, were
considered in \cite{ct;holomorphic.curves} (curiously without explicitly
mentioning ``minimal surfaces'') for which
\begin{align}
  Z_2:=X_3+iX_4=f(X_1+iX_2)=f(Z_1)
\end{align}
and (cp. Section \ref{sec:remarks})
\begin{align}\label{eq:ZZepsilon}
  [Z_1^\dagger,Z_1] + [Z_2^\dagger,Z_2]=\epsilon \mid.
\end{align}
Before taking up that parabola example (deriving many new properties,
and noting that it constitutes a discrete integrable system,
cp. \cite{h:diophantine}) let us first mention
(cp. Section~\ref{sec:hyperbola}) the simpler (though previously
unnoticed), most beautiful, example: the complex hyperbola
\begin{align}
  Z_1\cdot Z_2=c\mid,
\end{align}
yielding the recursion relations
\begin{align}
  r_n-r_{n+1}+\frac{|c|^2}{r_{n+1}}-\frac{|c|^2}{r_n}=\epsilon
\end{align}
with $r_n=|w_n|^2$, $Z_1\ket{n}=w_n\ket{n-1}$ for
$n\in\integers$. Solving the quadratic equation
\begin{align}
  r_n-\frac{|c|^2}{r_n}=-\epsilon n+\delta
\end{align}
gives (compare \eqref{eq:rnwngamma})
\begin{align}
  r_n = \frac{-\epsilon n+\delta}{2}\pm
  \sqrt{\parac{\frac{-\epsilon n+\delta}{2}}^2+|c|^2}
\end{align}
as an exact solution of \eqref{eq:ZZepsilon},
resp. \eqref{eq:X.double.com}. The classical limit of this Quantum
Curve is the minimal surface described in Section \ref{sec:hyperbola},
where $c$ was taken for simplicity to be real; for complex
$c=\alpha+i\beta$ a parametrization of the real 4-dimensional
embedding is
\begin{align}
  \xv(t)=\paraa{t\cos(u),t\sin(u),
  \tfrac{1}{t}(\alpha\cos(u)+\beta\sin(u)),
  \tfrac{1}{t}(-\alpha\sin(u)+\beta\cos(u))}_{t>0}.
\end{align}
Let us now apply our general method, explained in the introduction, to
the complex parabola (considered in \cite{ct;holomorphic.curves})
$Z_2=Z_1^2$, respectively:
\begin{align}
  &\xv(r,u) = \paraa{r\cos(u),r\sin(u),r^2\cos(2u),r^2\sin(2u)},
\end{align}
$\dot{\xv} = \paraa{\cos(u),\sin(u),2r\cos(2u),2r\sin(2u)}$,
$\xv' = \paraa{-r\sin(u),r\cos(u),-2r^2\sin(2u),2r^2\cos(2u)}$ implies
% \begin{equation}
%   \begin{split}
%     &\\
%     &}
%   \end{split}
% \end{equation}
\begin{align*}
  (g_{ab}) =
  \begin{pmatrix}
    1+4r^2 & 0\\
    0 & r^2+4r^4
  \end{pmatrix}
        =(1+4r^2)
        \begin{pmatrix}
          1 & 0 \\ 0 & r^2
        \end{pmatrix}
\end{align*}
Reparametrizing $\xv(r,u)=\tilde{\xv}(\tilde{r},u)$ according to
(cp. \eqref{eq:J})
\begin{equation}
  \begin{split}
    &\frac{d\tilde{r}}{dr} = \sqrt{g}(r) =r(1+4r^2)\implies
    \tilde{r} = \frac{1}{2}r^2+r^4-c\\
    &r^2 = -\frac{1}{4}+\sqrt{\frac{1}{16}+\tilde{r}+c}=\paraa{r(\tilde{r})}^2
  \end{split}
\end{equation}
shows that, with
$e^{i\Uh}\ket{n}=\ket{n+1}\hat{=}e^{i\varphi}\ket{e^{in\varphi}}$, $\hat{r}=r(\hat{\tilde{r}}=-i\hbar\frac{\d}{\d\varphi})$
\begin{align}
  W\ket{n}=w_n\ket{n+1} \qquad n=0,1,2,\ldots,
\end{align}
the condition
\begin{align}\label{eq:WWd.sq.mid}
  [\Wd,W] + [{\Wd}^2,W^2]\sim\mid,
\end{align}
resp. (cp. \cite{ct;holomorphic.curves}, eq. (25)), with $|w_n|^2=v_n$,
\begin{align}\label{eq:alphan.recursion}
  v_n-v_{n-1}+v_{n+1}v_n-v_{n-1}v_{n-2}=\epsilon
\end{align}
should ``approximately'' (s.b., including the integration constant $c$)
be solved by
\begin{align}\label{eq:alphan.formula}
  v_n = -\frac{1}{4}+\sqrt{\frac{1}{16}+n\hbar+c}=r_n^2=r^2(-n\hbar).
\end{align}
\eqref{eq:alphan.recursion}, resp.
\begin{align}\label{eq:alphan.short.rec}
  v_n\paraa{v_{n+1}+v_{n-1}+1}=2\hbar n+\delta
\end{align}
is to the first orders, indeed solved by \eqref{eq:alphan.formula},
both for $n\hbar\to 0$ (i.e. $n<n_0$, $\hbar\to 0$) and
$n\hbar\to\infty$ (resp. $\hbar$ fixed, $n\to\infty$) which is easily
verified by inserting the approximations of
\begin{align}
  v_n = \alpha + \sqrt{\beta^2+n\hbar} = \alpha+\beta\sqrt{1+\frac{n\hbar}{\beta^2}}
\end{align}
i.e.
\begin{align}
  v_n = \alpha+\beta
  \parac{1+\frac{n\hbar}{2\beta^2}-\frac{n^2\hbar^2}{8\beta^4}+\cdots}
  =\gamma+\frac{n\hbar}{2\beta}-\frac{n^2\hbar^2}{8\beta^3}
\end{align}
resp.
\begin{align}
  v_n = \alpha + \sqrt{n\hbar}\sqrt{1+\frac{\beta^2}{n\hbar}}
  =\sqrt{n\hbar}+\alpha+\frac{\beta^2}{2\sqrt{n\hbar}}
\end{align}
yielding (cp. \eqref{eq:alphan.short.rec}, $\epsilon=2\hbar$)
\begin{align}
  \delta+2\hbar n = \gamma\parac{2\gamma+1-\frac{\hbar^2}{4\beta^3}}
  +\frac{n\hbar}{2\beta}\parac{4\gamma+1-\frac{\hbar^2}{4\beta^3}}
  -\frac{n^2\hbar^2}{8\beta^3}\parac{4\alpha+1-\frac{\hbar^2}{4\beta^3}}+O(n^3)
\end{align}
i.e. (ignoring the $\hbar^2/4\beta^3$ correction) $4\gamma+1=4\beta$
($\Rightarrow$ $\alpha=-1/4$, which makes the $O(n^2\hbar^2)$ term
vanish) and
\begin{align}
  &\delta = \gamma(2\gamma+1)=\paraa{\beta-\tfrac{1}{4}}\paraa{2\beta+\tfrac{1}{2}}
  =2\paraa{\beta^2-\tfrac{1}{16}}=2c\\
  &\alpha=-\frac{1}{4}\notag
\end{align}
resp.
\begin{align}
  \delta+2\hbar n = 2\hbar n+\sqrt{n}\paraa{2\alpha+1+2\alpha}+
  \parab{\alpha(2\alpha+1)+2\beta^2}+\cdots
\end{align}
i.e., again,
\begin{align}
  \alpha = -\frac{1}{4},\quad
  &-\frac{1}{8}+2\beta^2=\delta\\
  &2\paraa{\beta^2-\tfrac{1}{16}}=2c.\notag
\end{align}
Finally note that while \eqref{eq:alphan.recursion}
($\epsilon=2\hbar$) for $n=0$ ($v_{-1}=0=v_{-2}$) gives
$v_0(1+v_1)=2\hbar$, \eqref{eq:alphan.short.rec} (for $n=0$)
implies $v_0(1+v_1)=\delta$, hence
\begin{align}
  \delta=2\hbar=\epsilon, \quad c=\hbar,
\end{align}
in which case
\begin{align}
  v_n = -\frac{1}{4}+\sqrt{\frac{1}{16}+n\hbar+\hbar}
\end{align}
automatically vanishes at $n=-1$ (cp. the Enneper case), and gives
\begin{align}
  v_0 = -\frac{1}{4}+\sqrt{\frac{1}{16}+\hbar},\label{eq:alpha.zero}
\end{align}
which is $1/2$ for $\hbar=1/2$ (the small discrepancy with the in
\cite{ct;holomorphic.curves} numerically calculated value
$\rho_0^2=v_0(\hbar=\tfrac{1}{2})\gtrapprox
\tfrac{9}{16}\gtrapprox\tfrac{1}{2}$
could partly be due to the non-negligible value of
$\tfrac{\hbar^2}{4\beta^3}=\frac{1}{4^3(\frac{3}{4})^3}=\frac{1}{27}$
in this case).

Note that \eqref{eq:alpha.zero} gives
\begin{align*}
  v_1 = 2v_0,\quad v_2 = v_0,\quad v_3 = 4v_0+2, v_4<0, 
\end{align*}
so that it does not correspond to an allowed initial value (as
$v_n=|w_n|^2$ must necessarily be non-negative), although as we will
see, $\epsilon-2\epsilon^2+8\epsilon^3+O(\epsilon^4)$ is rather
close.

Let us calculate the first few $v_n$'s, from
\begin{align}\label{eq:vn.recursion}
  v_n\paraa{v_{n+1}+v_{n-1}+1} = \epsilon(n+1),
\end{align}
for arbitrary $v_0=:x=:u_0$ (always taking $v_{-1}=0$):
\begin{align}
  &v_1 = \frac{\epsilon-x}{x}=:\frac{u_1}{u_0}\implies
  x\in I_1=(0,\epsilon)=:(c_0,c_1)\notag\\
  &v_2 = \frac{x^2+x(1+\epsilon)-\epsilon}{\epsilon-x}
    =\frac{(x-v_+^{(2)})(x-v_-^{(2)})}{u_1}=:\frac{u_2}{u_1}\label{eq:v.two}\\
  &v_{\pm}^{(2)}=\frac{1+\epsilon}{2}\parac{\pm\sqrt{1+\frac{4\epsilon}{(1+\epsilon)^2}}-1}\notag\\
  &\implies
    x\in I_2=(c_2,c_1),\quad c_2=v_+^{(2)}=\epsilon-2\epsilon^2+6\epsilon^3+O(\epsilon^4)\notag
\end{align}
\begin{align}
  &v_3 = \frac{-\epsilon}{x}\frac{(4x^2+x(1-2\epsilon)-\epsilon)}{x^2+x(1+\epsilon)-\epsilon}
  =\frac{-4\epsilon}{x}\frac{(x-v_+^{(3)})(x-v_-^{(3)})}{(x-v_+^{(2)})(x-v_-^{(2)})}=:\frac{u_3}{u_0u_2}\label{eq:v.three}\\
  &v_{\pm}^{(3)} = \frac{1-2\epsilon}{8}\parac{\pm\sqrt{1+\frac{16\epsilon}{(1-2\epsilon)^2}}-1}\notag\\
  &\implies x\in I_3=(c_2,c_3),\quad c_3=v_+^{(3)}=\epsilon-2\epsilon^2+12\epsilon^3+O(\epsilon^4)\notag
\end{align}
\begin{align}
  &v_4 = \frac{x}{\epsilon-x}\frac{\paraa{(3-2\epsilon)x^2-2x\epsilon(\epsilon+4)+5\epsilon^2}(-\epsilon)}
  {\paraa{4x^2+x(1-2\epsilon)-\epsilon}(-\epsilon)}=:\frac{u_0u_4}{u_1u_3}\label{eq:v.four}\\
  &u_4(x) = -\epsilon(3-2\epsilon)(x-v_+^{(4)})(x-v_-^{(4)})\notag\\
  &v_{\pm}^{(4)} = \frac{\epsilon(4+\epsilon)}{3-2\epsilon}\parac{\sqrt{1-\frac{5(3-2\epsilon)}{(\epsilon+4)^2}}+1}\notag\\
  &\implies x\in I_4=(c_4,c_3),\quad
    c_4 = v_-^{(4)}=\epsilon-2\epsilon^2+\parab{\tfrac{19}{2}-\tfrac{1}{144}}\epsilon^3+O(\epsilon^4)\notag\\
  %&v_5 = \frac{u_1u_5}{u_2u_4}\quad\cdots\notag\\
  &v_n(x) = \frac{u_n}{u_{n-3}}\cdot\frac{u_{n-4}}{u_{n-1}}\label{eq:vnx}
\end{align}
As indicated above, let $I_n\subset\reals$ denote an open interval on
which $v_n(x)$ is positive. We will now construct intervals
$I_n\subset I_{n-1}$ such that
$\bigcap_{n\in\naturals}I_n\neq \emptyset$, proving that there exist
at least one initial condition $x_0$ such that $v_n(x_0)>0$ for all
$n\geq 0$. To this end, we assume that $c_{2n-2}<c_{2n}<c_{2n-1}$ and
\begin{alignat}{2}
  &\quad I_{2n-1}=(c_{2n-2},c_{2n-1}) &\qquad
  &\quad I_{2n}=(c_{2n},c_{2n-1})\notag\\
  &\lim_{x\to c_{2n-2}}v_{2n-1}(x)=+\infty &
  &\lim_{x\to c_{2n-1}}v_{2n-1}(x) = 0\label{eq:vn.In.induction.assumption}\\
  &\lim_{x\to c_{2n}}v_{2n}(x)=0&
  &\lim_{x\to c_{2n-1}}v_{2n}(x)=+\infty\notag
\end{alignat}
where the limit points are approached from inside the intervals. It
follows from \eqref{eq:v.two}--~\eqref{eq:v.four} that these
conditions are met for $n=1,2$. From the recursion relation one
finds
\begin{align*}
  v_{2n+1} = \frac{(2n+1)\epsilon}{v_{2n}}-v_{2n-1}-1
\end{align*}
implying that
\begin{align*}
  \lim_{x\to c_{2n}}v_{2n+1}(x) = +\infty\qquad
  \lim_{x\to c_{2n-1}}v_{2n+1}(x) = -1.
\end{align*}
Hence, there exists $c_{2n+1}\in(c_{2n},c_{2n-1})$ such that
$v_{2n+1}(c_{2n+1})=0$ and $v_{2n+1}(x)>0$ for
$x\in I_{2n+1}=(c_{2n},c_{2n+1})$. Analogously,
\begin{align*}
  v_{2n+2} = \frac{(2n+2)\epsilon}{v_{2n+1}}-v_{2n}-1
\end{align*}
implies that
\begin{align*}
  \lim_{x\to c_{2n+1}}v_{2n+2}(x)=+\infty\qquad
  \lim_{x\to c_{2n}}v_{2n+1}(x)=-1
\end{align*}
and that there exists $c_{2n+2}\in (c_{2n},c_{2n+1})$ such that
$v_{2n+2}(c_{2n+2})=0$ and $v_{2n+2}(x)>0$ for
$x\in I_{2n+2}=(c_{2n+2},c_{2n+1})$. By induction, we conclude that
\eqref{eq:vn.In.induction.assumption} holds true for all $n$ and note that
\begin{align*}
  I_{n+1}\subset I_n\quad\text{ and }\quad\overline{I_{n+2}}\subset I_{n}
\end{align*}
implying that $\bigcap_{n\in\naturals}I_n\neq\emptyset$.
The uniqueness of the initial condition (conjectured in
\cite{ct;holomorphic.curves}, based on numerical findings)
\begin{align*}
  \bigcap_{n\in\integers} I_n = \{\hat{v}\},
\end{align*}
probably follows similarly (cp. \cite{clva:discrete.painleve}).
Writing \eqref{eq:alphan.recursion}, resp \eqref{eq:vn.recursion}, in
terms of $u_n$ resp.
\begin{align}
 r_n:=\tfrac{u_n}{u_{n-3}}\quad\text{(resp. }\alpha_n=\tfrac{r_n}{r_{n-1}})
\end{align}
gives
\begin{align}
  &\epsilon r_{n-1}r_{n-2}r_{n-3}+r_{n-1}^2(r_{n-2}+r_{n-3})=r_{n-2}r_{n-3}(r_n+r_{n+1})\label{eq:rrr.rec}\\
  &u_{n+1}u_{n}u_{n-1}u_{n-2}=u_{n-2}^2\paraa{u_{n+3}u_{n-1}+u_{n+2}u_n}
    -u_{n+1}^2\paraa{u_nu_{n-4}+u_{n-1}u_{n-3}}\notag\\
  &\epsilon(n+1)r_{n-1}r_{n-2} = r_nr_{n-1}+r_{n-2}(r_n+r_{n+1})\label{eq:rrr.sec.rec}\\
  &\epsilon(n+3)u_{n+2}u_nu_{n-1}=u_{n+3}u_{n-1}u_{n-2}
    +u_{n+2}\paraa{u_{n+1}u_{n-3}+u_nu_{n-2}}\notag,
\end{align}
i.e homogenizing the recursion relations, while increasing the number
of terms needed to calculate the next. Noting
\begin{align}
  \frac{u_n}{u_{n-3}}=r_n=\bra{0}{\Wd}^{n+1}W^{n+1}\ket{0}
\end{align}
it is tempting to try to prove the polynomiality (in $x=\bra{0}\Wd W\ket{0}$)
\begin{align}
  \bra{0}{\Wd}^nW^n\ket{0}\bra{0}{\Wd}^{n-3}W^{n-3}\ket{0}\cdots
\end{align}
compactly, via \eqref{eq:WWd.sq.mid}. Having found the substantial
cancellations leading to the extremely slow (``integrable'') growth of
the degrees of numerator and denominator of the rational function
$v_n(x)$ (cp. \eqref{eq:vnx}), the ``focusing'' of zeroes of the $u_n$
(note that also those building blocks are increasing for even $N$,
decreasing for odd $n$) one may wonder whether on can make this
``integrability'' more explicit. Indeed\footnote{We thank B. Eynard
  for suggesting \eqref{eq:vn.tau}.}
, writing
\begin{align}\label{eq:vn.tau}
  v_n = \frac{\tau_{n+1}\tau_{n-1}}{\tau_n^2}
\end{align}
produces polynomial functions of linearly growing degree (note the gap
between $\tau_2$ and $\tau_3$), which (due to the special nature of
the problem, cp. \eqref{eq:vnx}), are products of 3 successive $u$'s:
\begin{align}
  \tau_n(x)=u_{n-1}(x)u_{n-2}(x)u_{n-3}(x),
\end{align}
where for convenience we list the first few $u$'s, occurring in
\eqref{eq:v.two}/~\eqref{eq:v.three}/~\eqref{eq:v.four}:
\begin{equation}
  \begin{split}
    &u_0(x) = x,\quad
    u_1(x) = \epsilon-x,\quad
    u_2(x) = x^2+x(1+\epsilon)-\epsilon\\
    &u_3(x)=-\epsilon\paraa{4x^2+x(1-2\epsilon)-\epsilon},\quad
    u_4(x) = -\epsilon\paraa{(3-2\epsilon)x^2-2x\epsilon(\epsilon+4)+5\epsilon^2}\\
    &u_5(x) = -3\epsilon\paraa{x^3(1+6\epsilon)+x^2(1+3\epsilon+4\epsilon^2)-x\epsilon(2+3\epsilon+2\epsilon^2)+\epsilon^2(1-\epsilon)}.
  \end{split}
\end{equation}
The conserved quantity (analogue of \eqref{eq:rrr.rec}
resp. \eqref{eq:alphan.recursion}) written in terms of $\tau$'s reads
\begin{align}
  \tau_{n+2}\tau_n\tau_{n-1}^3\tau_{n-2}+\tau_{n+1}^2\tau_{n-1}^2\tau_{n-2}
  -\tau_{n+1}\tau_n^3\paraa{\tau_{n-2}^2+\tau_{n-1}\tau_{n-3}}
  =\epsilon\tau_{n+1}\tau_n^2\tau_{n-1}^2\tau_{n-2}
\end{align}
and the analogue of \eqref{eq:rrr.sec.rec}
\begin{align}
  \tau_{n+2}\tau_n\tau_{n-1}^2+\tau_{n+1}^2\paraa{\tau^2_{n-1}+\tau_n\tau_{n-2}}
  =\epsilon(n+1)\tau_{n+1}\tau_n^2\tau_{n-1},
\end{align}
which is the easiest one to generate the $\tau$'s:
\begin{align*}
  &\tau_{n+2} = \tau_{n+1}\bracketc{\epsilon(n+1)\frac{\tau_n}{\tau_{n-1}}-\frac{\tau_{n+1}}{\tau_n}-\frac{\tau_{n+1}\tau_{n-2}}{\tau_{n-1}^2}}\\
  &r_{n+1} = \epsilon(n+1)r_{n-1}-r_n-\frac{r_nr_{n-1}}{r_{n-2}}\\
  &\tau_n = \bra{0}\Wd W{\Wd}^2W^2\cdots{\Wd}^n W^n\ket{0}
    =v_0\cdots v_{n-1}\tau_{n-1} = r_{n-1}\tau_{n-1}
\end{align*}
Computer calculations show that the case of 2 monomials,
\begin{align}
  Z_1=W^\rho,\quad
  Z_2=W^q\quad p,q\in\integers,\qquad
  W\ket{n}=w_n\ket{n+1}
\end{align}
similarly corresponds to integrable recursion relations.

\section*{Acknowledgments}

\noindent
We would like to thank P. Clarkson, T. Damour, B. Eynard, M. Hynek and
F. Nijhoff for discussions.

\bibliographystyle{alpha}
\bibliography{qms}

\end{document}